\newtheorem{theorem}{Theorem}
\newtheorem{proposition}{Proposition}
\newtheorem{remark}{Remark}
\definecolor{orange}{RGB}{255,107,0}
\begin{document}
		\pdfoutput=1
    \title{Minimizing low-rank models of high-order tensors: Hardness, span, tight relaxation, and applications} 
    \author{Nicholas~D.~Sidiropoulos,~\IEEEmembership{Fellow,~IEEE,}~Paris A. Karakasis,~\IEEEmembership{Student~Member,~IEEE,}~Aritra Konar,~\IEEEmembership{Member,~IEEE}
		\thanks{Manuscript received May 30, 2023; revised Oct. 19, 2023; accepted Nov. 21, 2023. The Associate Editor coordinating the review was Dr. Yuxin Chen. This research was not supported by any sponsor. The authors believed it is worth pursuing. N. Sidiropoulos and P. Karakasis are with the Dept. of ECE,  University of Virginia, Charlottesville, VA, 22904 USA; e-mail:  (nikos,karakasis)@virginia.edu A. Konar is with the Dept. of EE, KU Leuven, Belgium; e-mail: aritra.konar@kuleuven.be. 

        This paper has supplementary downloadable material available at http://ieeexplore.ieee.org, provided by the author. The material includes software for reproducing the linear parity check decoding experiments. This material is 806 KB in size.

	    }
	    }	
\maketitle
\begin{abstract}
We consider the problem of finding the smallest or largest entry of a tensor of order $N$ that is specified via its rank decomposition. Stated in a different way, we are given $N$ sets of $R$-dimensional vectors and we wish to select one vector from each set such that the sum of the Hadamard product of the selected vectors is minimized or maximized. 
We show that this fundamental tensor problem is NP-hard for any tensor rank higher than one, and polynomial-time solvable in the rank-one case. We also propose a continuous relaxation and prove that it is tight for any rank. For low-enough ranks, the proposed continuous reformulation is amenable to low-complexity gradient-based optimization, and we propose a suite of gradient-based optimization algorithms drawing from projected gradient descent, Frank-Wolfe, or explicit parametrization of the relaxed constraints. We also show that our core results remain valid no matter what kind of polyadic tensor model is used to represent the tensor of interest, including Tucker, HOSVD/MLSVD, tensor train, or tensor ring. Next, we consider the class of problems that can be posed as special instances of the problem of interest. We show that this class includes the partition problem (and thus all NP-complete problems via polynomial-time transformation), integer least squares, integer linear programming, integer quadratic programming, sign retrieval (a special kind of mixed integer programming / restricted version of phase retrieval), and maximum likelihood decoding of parity check codes. We demonstrate promising experimental results on a number of hard problems, including state-of-art performance in decoding low density parity check codes and general parity check codes.      
\end{abstract}
\begin{IEEEkeywords}
Tensors, rank decomposition, inner product, algorithms, complexity theory, NP-hard problems, error control coding, maximum likelihood decoding, parity check codes, belief propagation, under/over-determined linear equations, Galois fields, sign retrieval.    
\end{IEEEkeywords}
\vspace{-0.25cm}
\section{Introduction}

Finding the smallest or largest entry of a matrix or tensor specified via its low-rank factors is a fundamental problem with numerous applications in science and engineering. The matrix version of the problem has received some attention \cite{Higham16,7373305}, motivated by applications in graph mining (e.g., most significant missing link prediction, nodes that share many neighbors), text/speech/audio similarity search and retrieval (e.g., using text embeddings), and recommender systems (e.g., finding the best item-context combination to recommend to a given user). 

The tensor version of the problem is considerably more powerful, as it allows going beyond bipartite matching / prediction which can be broadly useful in knowledge discovery. As an example, given embeddings of patients, conditions, drugs, clinical trials, therapies, we may be interested in finding the best match, as measured by a function of the pointwise product of the respective embeddings. Tensors can also be used to model high-dimensional probability distributions, wherein low-rank tensor models are used to break the curse of dimensionality while allowing easy marginalization and computation of conditional probabilities, which are crucial for prediction \cite{DBLP:journals/tsp/KargasSF18}. In this context, finding the largest element corresponds to finding the mode / maximum likelihood or maximum {\em a-posteriori} estimate of missing variables, and it is not unusual to encounter very high-order probability tensors indexed by hundreds of categorical input variables. Despite the obvious importance of the tensor version of this problem, there is scant literature about it - we found only \cite{Lu17}, which essentially extends the approach in \cite{7373305} to the low-order tensor case. 

\subsection{Prior work} 

The matrix case was considered in \cite{Higham16}, which proposed using a power-method type algorithm that works directly with the low-rank factors. In \cite{7373305}, the authors developed a randomized ``diamond sampling'' approach for computing the maximum element of the product of two matrices (which could be, e.g., two low-rank factors) in what they called the MAD (Maximum All-pairs Dot-product) search. Their algorithm comes with probabilistic performance guarantees and was demonstrated to work well in practice using a variety of datasets. As already mentioned above, \cite{Lu17} extends the randomized diamond sampling approach in \cite{7373305} to ``star sampling'' for the tensor case. These randomized algorithms do not scale well to high-order tensors, owing to the curse of dimensionality. 

A very different approach for the higher-order tensor version of the problem has been proposed in the computational physics / numerical algebra literature \cite{espigpostprocessing}. The basic idea of \cite{espigpostprocessing} (see also references therein) is as follows. By vectorizing the tensor and putting the resulting (very long) vector on the diagonal of a matrix, the tensor elements become eigenvalues corresponding to coordinate basis eigenvectors. This suggests that the maximum element of the tensor can be computed through a power iteration involving this very large matrix. Of course power iterations implemented naively will have prohibitive complexity (as tensor vectorization produces a very long vector and thus a huge matrix). The idea is therefore to employ a tensor factorization model to ease the matrix-vector multiplication of the diagonal matrix and the interim solution vector. This multiplication can be computed by summing the elements of the pointwise (Hadamard) product of the two vectors -- the vectorized tensor on the matrix diagonal and the interim solution vector. This product can be efficiently computed for various tensor models, but the pointwise multiplication of two tensors of rank $R$ generally has rank up to $R^2$, necessitating conversion back to a rank-$R$ tensor. The natural way to do this is via rank-$R$ least-squares approximation of the higher-rank product. This is a hard and generally ill-posed computational problem, because the best approximation may not even exist. Thus the method cannot be used with a tensor rank decomposition model (known as {\em Canonical Polyadic Decomposition} or {\em CPD}). The pointwise multiplication and least-squares approximation are easier with a so-called {\em tensor train} (TT) model. The Hadamard product of two TT models is another TT model whose wagon (factor) matrices are the Kronecker products of the respective factor matrices of the two TT models. This implies that pointwise multiplication of two TT models has complexity of order $NIR^4$ for an $I^N$ tensor of TT rank $R$ for each wagon. Moreover, rank reduction for TT models is SVD-based and has complexity order $NIR^3$. Hence, reducing the rank of the pointwise multiplication (which is $R^2$ due to the Kronecker product) induces a complexity of order $NI(R^2)^3=NIR^6$, and as a result, only small TT ranks can be supported in practice. Another limitation of \cite{espigpostprocessing} is that it is hard to pin down if and when the iteration will converge, because of the rank reduction step. The numerical experiments in \cite{espigpostprocessing} are interesting but limited in terms of exploring the accuracy-speedup trade-offs that can be achieved.  

Unlike \cite{espigpostprocessing}, our emphasis is on the tensor rank decomposition (CPD) model, in good part because many of the problems that we consider herein admit {\em closed-form} tensor rank decomposition formulations, thus bypassing the need for computational model fitting and/or rank reduction. Another issue is that, as we show for the first time in this paper, the core problem is NP-hard even with a TT model, so there is no intrinsic computational benefit to using one tensor decomposition model over the other. 

\subsection{Contributions}

Our contributions relative to the prior art are as follows: 
\begin{itemize}
    \item We focus on the higher-order tensor version of the problem, and we analyze its computational complexity. We show that the problem is easy when the tensor rank is equal to one, but NP-hard otherwise -- even if the rank is as small as two.
    \item We consider optimization-based (as opposed to algebraic or randomized) algorithms that can be used to compute good approximate solutions in the high-order tensor case. We provide an equivalent ``fluid'' (continuous-variable) reformulation of what is inherently a multi-way selection problem, and a suite of relatively lightweight gradient-based approximation algorithms. The continuous reformulation is derived by showing that a certain continuous relaxation involving a probability distribution instead of hard selection for each mode is actually tight. This is true for any rank, i.e., even for full-rank (unstructured) tensor models. The proposed algorithms take advantage of various optimization frameworks, from projected gradient descent to Franke-Wolfe and various ways of explicitly parametrizing the probability simplex constraints. For low-enough ranks, the associated gradient computations are computationally lightweight, even for very high-order tensors. 
    \item We show that our main results remain valid no matter what kind of polyadic tensor model is used to represent the tensor of interest, including Tucker, HOSVD/MLSVD, tensor train, or tensor ring. 
    \item We explore the ``span'' of the core problem considered, i.e., the class of problems that can be posed as special instances of computing the minimum or maximum element of a tensor from its rank decomposition. We show that this class includes the partition problem (and thus all NP-complete problems via polynomial-time transformation), as well as integer least squares, integer linear programming, integer quadratic programming, sign retrieval (a special kind of mixed integer programming / restricted version of phase retrieval), maximum likelihood decoding of parity check codes, and finding the least inconsistent solution of an overdetermined system of linear equations over a Galois field. 
    \item Finally, we demonstrate promising experimental results on a number of hard problems, including better than state-of-art performance in decoding both low density parity check codes and general parity check codes, and sign retrieval.   
\end{itemize}
     
\vspace{-0.25cm}
\section{Preliminaries}

Consider a matrix ${\bf M}={\bf A}_1 {\bf A}_2^T$, where ${\bf M}$ is $I_1 \times I_2$, ${\bf A}_1$ is $I_1 \times R$ and ${\bf A}_2$ is $I_2 \times R$, with $R \leq \min(I_1,I_2)$ ($R < \min(I_1,I_2)$ if ${\bf M}$ is not full-rank).  Three other ways to write the same relation are ${\bf M} = \sum_{r=1}^R {\bf A}_1(:,r) {\bf A}_2(:,r)^T$, where ${\bf A}(:,r)$ stands for the $r$-th column of ${\bf A}$; ${\bf M} = \sum_{r=1}^R {\bf A}_1(:,r) \circ {\bf A}_2(:,r)$, where $\circ$ stands for the vector outer product; and ${\bf M}(i_1,i_2)=\sum_{r=1}^R {\bf A}_1(i_1,r) {\bf A}_2(i_2,r)$, where ${\bf M}(i_1,i_2)$ denotes an element of matrix ${\bf M}$, with obvious notation. From the latter, notice that element $(i_1,i_2)$ of matrix ${\bf M}$ is the inner product of two row vectors, namely row ${\bf A}_1(i_1,:)$ of matrix ${\bf A}_1$, and row ${\bf A}_2(i_2,:)$ of matrix ${\bf A}_2$. 

We are interested in finding the smallest or largest element of matrix ${\bf M}$ from the factors ${\bf A}_1$ and ${\bf A}_2$. From the latter relation, we see that seeking the smallest (or largest) element of ${\bf M}$ corresponds to seeking a pair of row vectors, one drawn from ${\bf A}_1$ and the other from ${\bf A}_2$, which have the smallest (or largest) possible inner product. One obvious way to do this is to generate all inner products, i.e., the elements of the product ${\bf A}_1 {\bf A}_2^T$, and find the smallest or largest. This entails complexity $I_1 I_2 R$, which can be high, even when $R$ is small -- e.g., consider the case where $I_1$ and $I_2$ are in the order of $10^5$ or more. Is there a way to avoid performing order $I_1 I_2$ computations? When $R=\min(I_1,I_2)$, such complexity is unavoidable, because then the elements of ${\bf M}$ can be freely chosen independently of each other (e.g., from an i.i.d. Gaussian distribution). When $R$ is small however, there seems to be hope. 

Generalizing, consider a tensor ${\bf {\cal T}}$ of order $N$ and size $I_1 \times I_2 \times \cdots \times I_N$, for which we are given a low-rank factorization, namely 
\[
{\bf {\cal T}}=\sum_{r=1}^R {\bf A}_1(:,r) \circ {\bf A}_2(:,r) \circ \cdots \circ {\bf A}_N(:,r),
\]
where ${\bf A}_n$ is $I_n \times R$, and $R$ (when minimal for this decomposition to hold) is the rank of ${\bf {\cal T}}$, in which case the above is known as the {\em canonical polyadic decomposition} (CPD) of ${\bf {\cal T}}$ \cite{7891546}. When the elements of ${\bf {\cal T}}$ are non-negative and the elements of the factor matrices ${\bf A}_n$ are constrained to be non-negative, then the smallest $R$ for which such a decomposition exists is called the non-negative rank of ${\bf {\cal T}}$, which can be higher than the (unconstrained) rank of ${\bf {\cal T}}$. In the sequel, we do not assume that $R$ is minimal; any polyadic decomposition will do for our purposes. In scalar form, we get that
\[
{\bf {\cal T}}(i_1,i_2,\cdots,i_N)=\sum_{r=1}^R {\bf A}_1(i_1,r) {\bf A}_2(i_2,r) \cdots {\bf A}_N(i_N,r),
\]
which reveals that every element of ${\bf {\cal T}}$ is an inner product of $N$ row vectors, one drawn from each of the matrices ${\bf A}_1,\cdots,{\bf A}_N$. An alternative way to write this is using the Hadamard (element-wise) product, denoted by $*$, as
\[
{\bf {\cal T}}(i_1,i_2,\cdots,i_N)= \left( {\bf A}_1(i_1,:) * {\bf A}_2(i_2,:) * \cdots * {\bf A}_N(i_N,:) \right) {\bf 1},  
\]
where ${\bf 1}$ is the $R \times 1$ vector of all $1$'s, used to sum up the $R$ components of the Hadamard product. We see that finding the smallest or largest element of ${\bf {\cal T}}$ is tantamount to finding the smallest or largest inner product of $N$ row vectors, one drawn from each ${\bf A}_n$ matrix. There is an obvious way to do this, by generating all $I_1 \times I_2 \times \cdots \times I_N$ elements of ${\bf {\cal T}}$, at a complexity cost of $R(N-1)$ flops each, but this exhaustive search is much worse than it is in the matrix case. It quickly becomes intractable even for moderate $N$ and $I_N$, e.g., $N=20$ with $I_n=10$, $\forall n \in \left\{1,\cdots,N\right\}$. Is there a more efficient way to do this? 

\section{Theory}

When $R=1$ and all the ${\bf A}_n$ matrices (column vectors ${\bf a}_n$ in this case) only have non-negative entries, it is easy to see that the smallest element of ${\bf {\cal T}}$ is ${\bf {\cal T}}(\check i_1,\cdots,\check i_N)$ with $\check i_n \in \arg \min_{i_n} {\bf a}_n(i_n)$, and likewise the largest is ${\bf {\cal T}}(\hat i_1,\cdots,\hat i_N)$ with $\hat i_n \in \arg \max_{i_n} {\bf a}_n(i_n)$. For $R > 1$ however, the answer is no, in the worst case -- even if $R=2$ and the elements of all the ${\bf A}_n$ matrices are non-negative. We have the following result. 

\begin{theorem}
Finding the smallest element of a tensor from its rank factorization is NP-hard, even when the tensor rank is as small as $R=2$ and the rank-one factors are all non-negative. This means that for large $N$, the worst-case complexity is exponential in $N$. 
\label{theorem:NP}
\end{theorem}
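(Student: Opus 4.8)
The plan is to prove NP-hardness by a polynomial-time reduction from the \textsc{Partition} problem: given positive integers $a_1,\dots,a_N$, decide whether some sign vector $\mathbf{x}\in\{-1,+1\}^N$ satisfies $\sum_{n=1}^N x_n a_n=0$, equivalently whether $\min_{\mathbf{x}\in\{-1,+1\}^N}\big(\sum_n x_n a_n\big)^2=0$. First I would build an order-$N$ tensor in which mode $n$ has dimension $I_n=2$, the two values of the index $i_n$ encoding the choices $x_n=+1$ and $x_n=-1$, so that entries of $\mathcal{T}$ are in bijection with sign vectors $\mathbf{x}$; the aim is to make each entry a strictly increasing function of $\big|\sum_n x_n a_n\big|$, so that locating the smallest entry exactly decides \textsc{Partition}. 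The device that keeps the rank at $R=2$ is to exponentiate, turning the mode-wise \emph{products} inside a rank-one term into a \emph{sum} in an exponent: with a base $c>1$, let $\mathbf{A}_n$ be the $2\times 2$ matrix with diagonal entries $c^{a_n}$ and off-diagonal entries $c^{-a_n}$. Then the two rank-one terms of $\mathcal{T}$ evaluate to $\prod_n c^{\,x_n a_n}=c^{\,S}$ and $\prod_n c^{-x_n a_n}=c^{-S}$, with $S:=\sum_n x_n a_n$, so $\mathcal{T}(i_1,\dots,i_N)=c^{\,S}+c^{-S}$ — an even, convex function of $S$ whose minimum value $2$ is attained exactly at $S=0$, and whose value is at least $c+c^{-1}>2$ otherwise, since $S$ is a nonzero integer in every ``no'' instance. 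Every factor entry is strictly positive, so the construction meets the hypotheses of the theorem, and \textsc{Partition} is a ``yes'' instance iff the smallest entry of $\mathcal{T}$ equals $2$. Since \textsc{Partition} is NP-complete, this gives NP-hardness; as the decision version ``is the smallest entry $\le\theta$?'' is in NP (a witness is an index tuple, and $\mathcal{T}(i_1,\dots,i_N)=\sum_r\prod_n\mathbf{A}_n(i_n,r)$ is polynomial-time evaluable), it is in fact NP-complete, and actually producing a minimizing tuple is at least as hard.

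The step I expect to be the real obstacle is making the reduction genuinely polynomial time: for any fixed base $c$, the entry $c^{a_n}$ is irrational and has bit-length $\Theta(a_n)$, i.e.\ exponential in the size of the \textsc{Partition} instance. I would remove this blow-up by replacing the exponential with its first-order surrogate — take $\mathbf{A}_n$ to have diagonal entries $1+\epsilon a_n$ and off-diagonal entries $1-\epsilon a_n$ for a small rational $\epsilon>0$ of only $O(\log N+\log\max_n a_n)$ bits, e.g.\ $\epsilon=1/\big(3N^2(\max_n a_n)^2\big)$. Expanding the two rank-one terms, $\prod_n(1+\epsilon x_n a_n)+\prod_n(1-\epsilon x_n a_n)=2\sum_{k\,\mathrm{even}}\epsilon^k e_k(x_1 a_1,\dots,x_N a_N)=2+\epsilon^2\big(S^2-\sum_n a_n^2\big)+\rho$, where $e_k$ is the $k$-th elementary symmetric polynomial and $\rho$ gathers the terms of order $k\ge 4$; here $S^2$ equals $0$ for ``yes'' instances and is at least $1$ otherwise, while $|\rho|\le 2\sum_{k\ge 4}(\epsilon N\max_n a_n)^k$, which the stated $\epsilon$ keeps below $\epsilon^2/2$. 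Consequently the threshold $\theta:=2-\epsilon^2\sum_n a_n^2+\epsilon^2/2$ (a number of polynomially many bits) separates the two cases: the smallest entry is $<\theta$ for ``yes'' instances and $\ge\theta$ for ``no'' instances. All factor entries lie in $(0,2)$, the rank is $2$, and the factor matrices together with $\theta$ are computable in polynomial time, so the reduction is complete; verifying the bound on $\rho$ and the case separation is a short, routine computation.

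Finally, the closing sentence of the theorem — exponential worst-case complexity in $N$ — follows because the reduction maps an $N$-item \textsc{Partition} instance to an order-$N$ tensor, so a subexponential-in-$N$ algorithm for the tensor problem would yield one for \textsc{Partition}, contrary to the standard belief that NP-complete problems admit no such algorithm.
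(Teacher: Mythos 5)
Your reduction is, at its core, the same one the paper uses: both reduce from \textsc{Partition} by exponentiating so that the two squared exponential terms become two non-negative rank-one factors of a $2\times\cdots\times 2$ tensor, and the constructions coincide up to reparametrization (the paper's entry $e^{2\sum_n w_n i_n}+e^{2\sum_n w_n(1-i_n)}$ equals $e^{\sum_n w_n}\left(e^{S}+e^{-S}\right)$ with $S=\sum_n w_n(2i_n-1)$, which is your $c^{S}+c^{-S}$ up to a global constant). Where you genuinely diverge is in the second half. You correctly observe that \textsc{Partition} is only NP-hard under binary encoding of the $a_n$, so a factor entry such as $e^{2w_n}$ (or $c^{a_n}$ for any fixed rational $c>1$) has bit-length exponential in the input size, and the reduction as the paper writes it is not literally polynomial-time on a Turing machine. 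Your fix --- replacing $c^{\pm a_n}$ by $1\pm\epsilon a_n$ with $\epsilon=1/\bigl(3N^2(\max_n a_n)^2\bigr)$, expanding in elementary symmetric polynomials to get $2+\epsilon^2\bigl(S^2-\sum_n a_n^2\bigr)+\rho$ with $|\rho|<\epsilon^2/2$, and separating yes/no instances by the rational threshold $\theta=2-\epsilon^2\sum_n a_n^2+\epsilon^2/2$ --- checks out (with your $\epsilon$ one gets $|\rho|\le\epsilon^2/3$), keeps all factor entries positive and of polynomial bit-length, and preserves rank $2$. You also supply the NP membership of the decision version, which the paper does not claim. So your proposal buys a rigorously polynomial-time reduction where the paper's argument implicitly assumes exact real arithmetic; the paper's version buys brevity and a cleaner closed-form factor matrix. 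Both establish the theorem as stated.

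One small caveat on your closing sentence (which mirrors a looseness in the paper's own statement): NP-hardness does not by itself imply exponential worst-case complexity; that conclusion requires an additional assumption such as P $\neq$ NP or, for a genuinely exponential lower bound, something like the Exponential Time Hypothesis. You hedge this correctly with ``contrary to the standard belief,'' so this is a presentational point, not a gap.
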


\begin{proof}
We show that an arbitrary instance of the {\em partition problem}, which is known to be NP-complete \cite{GareyJohnson79}, can be converted to a specific instance of the decision version of the problem of interest. This means that if we could efficiently solve any instance of the problem of interest, we would be in position to efficiently solve an arbitrary instance of the partition problem, which is not possible according to the current scientific consensus, unless P=NP. 

Recall the {\em partition problem}, which is as follows. We are given $N$ positive integers $w_1,\cdots,w_N$ (repetitions are allowed), and we wish to know whether there is a subset of indices ${\cal S}$, such that 
\[
\sum_{n \in {\cal S}} w_n = \sum_{n \notin {\cal S}} w_n.  
\]
Consider binary variables $\left\{i_n \in \left\{0,1\right\}\right\}_{n=1}^N$, with $i_n$ designating whether $n \in {\cal S}$ or not, i.e., $i_n=1$ if $n \in {\cal S}$, else $i_n = 0$. Deciding whether a suitable ${\cal S}$ exists is equivalent to deciding whether  
\[
\min_{\left\{i_n \in \left\{0,1\right\}\right\}_{n=1}^N} \left( \sum_{n=1}^N w_n i_n - \sum_{n=1}^N w_n (1-i_n) \right)^2 \stackrel{?}{>} 0.
\]
We will instead use a reformulation of the above which is more conducive for our purposes, namely
\[
\min_{\left\{i_n \in \left\{0,1\right\}\right\}_{n=1}^N} \left( e^{\sum_{n=1}^N w_n i_n} - e^{\sum_{n=1}^N w_n (1-i_n)} \right)^2 \stackrel{?}{>} 0.
\]
Expanding the square and noting that the product term is a constant, we obtain 
\[
\min_{\left\{i_n \in \left\{0,1\right\}\right\}_{n=1}^N} \hspace{-2mm} \left( e^{2\sum_{n=1}^N w_n i_n} + e^{2\sum_{n=1}^N w_n (1-i_n)} \right) \hspace{-0.5mm}\stackrel{?}{>}\hspace{-0.5mm} 2 e^{\sum_{n=1}^N w_n}.
\]
Each of the exponential terms is separable, i.e., a rank-one tensor comprising non-negative factors. For example, the first term is 
\[
e^{\sum_{n=1}^N 2 w_n i_n} = e^{2 w_1 i_1} e^{2 w_2 i_2} \cdots e^{2 w_N i_N}.
\]
It follows that the function to be minimized is a tensor of size $2 \times \cdots \times 2 = 2^N$ and rank $R=2$, with the following $2 \times 2 $ matrix factors
\[
{\bf A}_n = \begin{bmatrix}
1 & e^{2 w_n}\\
e^{2 w_n} & 1
\end{bmatrix},~~~ \forall n \in \left\{1,\cdots,N\right\}.
\]
Hence, the decision version of our problem with $R=2$ and non-negative factors contains every instance of the partition problem as a special case. It follows that the decision version of our problem is NP-hard, and thus its optimization version is NP-hard as well.
\end{proof}

In fact, the same NP-hardness result carries over to other popular tensor models. These include the so-called Tucker model \cite{Tucker1966c},  Multilinear SVD (MLSVD) \cite{MLSVD}, the Tensor Train (TT) decomposition \cite{TensorTrain}, and the Tensor Ring (TR) decomposition \cite{TensorRing}.  
\begin{theorem}
Finding the smallest element of a tensor from its Tucker, MLSVD, TT, or TR factorization is NP-hard. 
\label{theorem:NPall}
\end{theorem}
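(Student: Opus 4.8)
The plan is to avoid arguing hardness from scratch and instead recycle the polynomial-time reduction already established in the proof of Theorem~\ref{theorem:NP}. Recall that that proof maps an arbitrary partition instance $w_1,\dots,w_N$ to a $2\times\cdots\times 2$ tensor $\mathcal{T}$ of CP-rank $2$ with nonnegative factors $\mathbf{A}_n=\begin{bmatrix}1 & e^{2w_n}\\ e^{2w_n} & 1\end{bmatrix}$, such that deciding the partition question amounts to comparing $\min\mathcal{T}$ with a known threshold. Since this map is polynomial-time, it suffices to show that this \emph{same} tensor $\mathcal{T}$ admits, in each of the Tucker, MLSVD/HOSVD, TT, and TR formats, a representation whose description length is polynomial (in fact linear) in $N$. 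Then ``partition instance $\mapsto$ $\mathcal{T}$ in format $\mathcal{X}$'' is still a polynomial-time reduction, so an efficient minimizer over format $\mathcal{X}$ would decide partition; NP-hardness of the decision version, and hence of the optimization version, follows exactly as in Theorem~\ref{theorem:NP}. The point that makes this work is that $\mathcal{T}$ is ``small'' in \emph{every} structural sense at once: CP-rank $2$, multilinear rank $(2,\dots,2)$, TT-rank $2$, and TR-rank $2$.

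The Tucker, TT, and TR cases I would handle by quoting standard inter-format conversions. For Tucker, a rank-$R$ CPD equals a Tucker model with an $R\times\cdots\times R$ superdiagonal core; here that core is $2\times\cdots\times 2$ with two nonzero (unit) entries, described by $O(1)$ scalars, which together with the $N$ matrices $\mathbf{A}_n\in\mathbb{R}^{2\times 2}$ gives an $O(N)$-size model. For TT, I would use the canonical CPD$\to$TT embedding with TT-ranks equal to the CP-rank: the first core is $\mathbf{A}_1$ reshaped to $1\times 2\times 2$, the last is $\mathbf{A}_N$ reshaped to $2\times 2\times 1$, and each interior core $\mathcal{G}_n\in\mathbb{R}^{2\times 2\times 2}$ is diagonal in its rank indices, $\mathcal{G}_n(r,i_n,r')=\mathbf{A}_n(i_n,r)\,\delta_{r,r'}$; contracting the chain collapses all rank indices to a common $r$ and reproduces $\mathcal{T}$ exactly, again with $O(N)$ parameters. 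For TR, I would simply note that a TT is the special case of a TR with unit boundary rank (trivial trace), so the TT cores above already constitute a valid TR model (alternatively one closes the ring with an identity factor).

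The case requiring genuine care — and the one I expect to be the main obstacle — is MLSVD/HOSVD, because there the mode factors must have orthonormal columns and the core must satisfy the all-orthogonality and ordering conditions, so one cannot just hand over the $\mathbf{A}_n$. My plan is to orthogonalize mode by mode. Here the construction is unusually convenient: each $\mathbf{A}_n$ is symmetric with the \emph{common} eigenbasis $\mathbf{U}=\tfrac{1}{\sqrt 2}\begin{bmatrix}1 & 1\\ 1 & -1\end{bmatrix}$, so I would take $\mathbf{U}_n=\mathbf{U}$ for every $n$ and set the core $\mathcal{S}=\mathcal{T}\times_1\mathbf{U}^{\!\top}\times_2\cdots\times_N\mathbf{U}^{\!\top}$. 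Two things then need to be checked: (i) $\mathcal{S}$ is still $2\times\cdots\times 2$ but, being a multilinear image of a CP-rank-$2$ tensor, it retains a rank-$2$ structure and is therefore itself storable with $O(N)$ scalars rather than $2^N$; and (ii) $\mathcal{S}$ genuinely satisfies the HOSVD conditions — its mode-$n$ slices are pairwise Frobenius-orthogonal and norm-ordered — which should follow from the symmetry of the $\gamma_n\leftrightarrow\tilde\gamma_n$ sign pattern inherited from $\mathbf{e}_n=\mathbf{J}\mathbf{d}_n$. Discharging (i) and (ii) — i.e., certifying that the orthogonalized object is simultaneously a legitimate MLSVD and compactly describable — is the only nonroutine part; everything else is transcription of known conversions, and the complexity conclusion is then identical to that of Theorem~\ref{theorem:NP}.
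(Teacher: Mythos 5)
Your proposal is correct and follows essentially the same route as the paper: reuse the rank-two partition tensor from Theorem~\ref{theorem:NP} and exhibit polynomial-size Tucker, MLSVD, TT, and TR representations of it (the paper cites a CPD-to-TT conversion where you build the diagonal interior cores explicitly, and it handles MLSVD by the same orthogonalize-and-absorb-into-the-core step you describe). Your additional care about the compact storability and all-orthogonality of the MLSVD core is a refinement of the same argument, not a different one.
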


\begin{proof}
All these models are outer product decompositions that are related. In particular, a rank two CPD is equivalent to a Tucker model with a diagonal $2 \times 2 \times \cdots \times 2$ core matrix. Thus the closed-form rank-two model of the partition problem is also a low multilinear rank ($=2$) Tucker model. We may orthonormalize the loading matrices ${\bf A}_n$ if we want to obtain an MLSVD model that features a dense core, obtained by absorbing the inverse transformations into the core. Thus if we could find the minimum element of every Tucker or MLSVD model of multilinear rank $(2,2,\cdots,2)$ efficiently, we would be in position to solve all instances of the partition problem. 

For the TT decomposition, \cite{CPD-TT} has shown that for low CPD ranks (smaller than or equal to any of the tensor outer size dimensions), it is possible to explicitly construct an equivalent TT model whose lower-order cores exhibit the same low rank as the original high-order CPD model. It follows that we can express our CPD model of the partition problem as a TT model with core ranks equal to two. Finally, the TR model is a generalization of the TT model. In all cases, if we could find the corresponding minimum element efficiently, we would be in position to efficiently solve every instance of the partition problem.  
\end{proof}

\begin{remark}
The decision version of the partition problem is NP-complete, which means that every other NP-complete problem can be transformed in polynomial time to an instance of our problem of interest, i.e., computing the minimum element of a tensor of non-negative rank two. Theorem \ref{theorem:NP} thus speaks not only for the hardness, but also for the span of problems that can be considered within our present framework.    
\end{remark}
\begin{remark}
Theorem \ref{theorem:NP} adds a new problem to a list of tensor problems that are known to be NP-hard \cite{10.1145/2512329}. Notwithstanding, it is interesting that the problem of finding the smallest or largest entry is NP-hard even for tensors of rank as low as $R=2$. 
\end{remark}
\begin{remark}
Multi-way partitioning \cite{MultiWayPartitioning} is a generalization that seeks to split $N$ numbers to $M$ optimally balanced groups. This can also be cast as finding the minimum element of a tensor from its low-rank factors. Let $S=\sum_{n=1}^N w_n$, and $i_n \in \left\{1,\cdots,M \right\}$ be a variable indicating to which of the $M$ groups $w_n$ is assigned to. Consider minimizing the loss function
\[
\sum_{m=1}^M \left( e^{\sum_{n=1}^N w_n 1(i_n=m)}-e^{S/M} \right)^2.
\]
Expanding the square we obtain two rank-one terms and a constant that is irrelevant to optimization. The overall loss is therefore a tensor of rank $R=2M$, and it is easy to write out the associated factor matrices, similar to what we did for the basic partition problem. 
\end{remark}
When the low-rank factors comprise positive and negative values, then even the $R=1$ case can be challenging, as the minimum or the maximum element of each ${\bf a}_n$ can be involved in generating the overall minimum (or maximum), due to the presence of signs. In principle, this seemingly entails explicit or implicit enumeration over $2^N$ possibilities. (If a zero exists in any ${\bf a}_n$, then zero should also be considered as a candidate at the very end.) The good news is that there is structure to this problem: one can invoke the principle of optimality of dynamic programming (DP). The key observation is that \[
\min_{i,j}\boldsymbol{\alpha}_i \boldsymbol{\beta}_j =\min \begin{pmatrix}
\min_i\boldsymbol{\alpha}_i\min_j\boldsymbol{\beta}_j\\
\min_i\boldsymbol{\alpha}_i\max_j\boldsymbol{\beta}_j\\
\max_i\boldsymbol{\alpha}_i\min_j\boldsymbol{\beta}_j\\
\max_i\boldsymbol{\alpha}_i\max_j\boldsymbol{\beta}_j
\end{pmatrix}
\]
and likewise
\[
\max_{i,j}\boldsymbol{\alpha}_i \boldsymbol{\beta}_j =\max\begin{pmatrix}
\min_i\boldsymbol{\alpha}_i\min_j\boldsymbol{\beta}_j\\
\min_i\boldsymbol{\alpha}_i\max_j\boldsymbol{\beta}_j\\
\max_i\boldsymbol{\alpha}_i\min_j\boldsymbol{\beta}_j\\
\max_i\boldsymbol{\alpha}_i\max_j\boldsymbol{\beta}_j
\end{pmatrix}.
\]
Thinking of $\boldsymbol{\alpha}$ as the vector having as elements the minimum and the maximum element products over the first $k$ modes and $\boldsymbol{\beta}$ as ${\bf a}_{k+1}$, we can compute the minimum and maximum up to the $k+1$-th mode using the formulas above. It follows that

\begin{proposition}
\label{prop:one}
Finding the smallest or largest element of a rank-one tensor from $\left\{ {\bf a}_n \in \mathbb{R}^{I_n \times 1} \right\}_{n=1}^N$ can be accomplished via DP at complexity that is linear in $N$. 
\end{proposition}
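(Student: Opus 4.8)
The plan is to turn the min/max identity displayed just above the statement into a dynamic program over the modes and then bound its cost. I would first introduce the \emph{state}: for $k=1,\dots,N$ set
\[
L_k:=\min_{i_1,\dots,i_k}\prod_{n=1}^k {\bf a}_n(i_n),\qquad U_k:=\max_{i_1,\dots,i_k}\prod_{n=1}^k {\bf a}_n(i_n),
\]
so that the base case is $L_1=\min_i {\bf a}_1(i)$, $U_1=\max_i {\bf a}_1(i)$ and the quantities we want are $L_N$ and $U_N$. The claim to be proved by induction on $k$ is that the pair $(L_{k+1},U_{k+1})$ depends on the first $k$ modes only through $(L_k,U_k)$, via the four-corner recursion
\[
L_{k+1}=\min\{L_k\ell_{k+1},\,L_ku_{k+1},\,U_k\ell_{k+1},\,U_ku_{k+1}\},\qquad U_{k+1}=\max\{L_k\ell_{k+1},\,L_ku_{k+1},\,U_k\ell_{k+1},\,U_ku_{k+1}\},
\]
where $\ell_{k+1}:=\min_i{\bf a}_{k+1}(i)$ and $u_{k+1}:=\max_i{\bf a}_{k+1}(i)$.

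The inductive step is the heart of the argument and is exactly the displayed identity, applied with $\boldsymbol{\alpha}$ ranging over the (possibly exponentially large) set $P_k$ of achievable partial products $\prod_{n=1}^k{\bf a}_n(i_n)$ and $\boldsymbol{\beta}={\bf a}_{k+1}$. For fixed $q$ the map $p\mapsto pq$ is monotone, so its extrema over $p\in P_k$ are attained at $\min P_k=L_k$ or $\max P_k=U_k$, both of which lie in $P_k$ by definition; hence $\min_{p\in P_k,\,i}p\,{\bf a}_{k+1}(i)=\min_i\min\{L_k{\bf a}_{k+1}(i),U_k{\bf a}_{k+1}(i)\}$, and applying monotonicity once more in the second argument, now for each of the two fixed values $p\in\{L_k,U_k\}$, collapses the inner optimization onto $\ell_{k+1},u_{k+1}$, yielding the stated recursion; the formula for $U_{k+1}$ is identical with $\max$ in place of $\min$. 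A zero entry in some ${\bf a}_n$ requires no special treatment: it merely forces $0\in[L_k,U_k]$, so it is never an overlooked extreme.

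Finally I would account for complexity and for solution recovery. Scanning ${\bf a}_{k+1}$ to obtain $\ell_{k+1},u_{k+1}$ costs $O(I_{k+1})$ comparisons and each update is $O(1)$, so the whole sweep is $O\!\big(\sum_{n=1}^N I_n\big)$ -- linear in $N$ (and in the total input size), versus the $\prod_n I_n$ entries that exhaustive search would enumerate. To return an actual minimizing (or maximizing) index tuple, at step $k$ I would record which of the four corners realized $L_{k+1}$ (resp.\ $U_{k+1}$) together with the corresponding $\arg\min$/$\arg\max$ index in ${\bf a}_{k+1}$, then backtrack from $k=N$ down to $k=1$, adding only $O(N)$ overhead. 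I do not expect a genuine obstacle; the single point that needs care is the state reduction -- legitimately replacing the exponential set $P_k$ by just its two extreme values $L_k,U_k$ -- which is precisely the monotonicity of the multilinear objective in each individual factor, i.e.\ the principle of optimality specialized to this problem.
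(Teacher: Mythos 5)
Your proposal is correct and follows essentially the same route as the paper: the four-corner identity displayed just before the proposition is exactly your inductive step, with the state $(L_k,U_k)$ playing the role of the paper's $\boldsymbol{\alpha}$ and ${\bf a}_{k+1}$ that of $\boldsymbol{\beta}$, and the $O\bigl(\sum_n I_n\bigr)$ cost accounting matches. Your observation that zeros need no special handling (since the bilinear map attains its extrema at corners of $[L_k,U_k]\times[\ell_{k+1},u_{k+1}]$) is sound and in fact slightly cleaner than the paper's parenthetical remark about treating zero as a separate candidate.
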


When $R > 1$, the problem of finding the minimum (or maximum) element of a CPD-factored tensor can be described as follows. One has $N$ buckets of $R$-dimensional vectors, with the $n$-th bucket having $I_n$ vectors -- the rows of ${\bf A}_n$. Finding the minimum element of the CPD-factored tensor is equivalent to selecting a single row vector from each bucket ${\bf A}_n$ such that the inner product of the $N$ resulting vectors is minimized. This is inherently a discrete optimization problem that is NP-hard per Theorem \ref{theorem:NP}. A possible solution strategy is to employ coordinate descent: fixing all indices except $i_n$, we are looking to minimize or maximize over $i_n$ the inner product
\[
{\bf A}_n(i_n,:) {\bf d}_{-n}^T,~\text{ for }~{\bf d}_{-n} := \substack{*\\m \neq n} {\bf A}_m(i_m,:),
\]
which only requires computing the matrix-vector product ${\bf A} {\bf d}_{-n}^T$ and finding its smallest or largest element. Such discrete coordinate descent can be extended to optimizing over small (and possibly randomly chosen) blocks of variables at a time. This is akin to what is known as alternating (block coordinate) optimization in the context of tensor factorization, and it can work quite well in practice for small to moderate $N$. However, such an approach does not seem to scale well with higher $N$, and it is not particularly elegant. We would like to have the option of updating all variables at once, as we do in continuous optimization -- but the problem at hand is discrete and does not appear amenable to tools from continuous optimization. Thankfully, appearances can be deceiving. We have the following result. 

\begin{theorem} Finding the smallest element of a $N$-way tensor in CPD form is {\em equivalent} to the following continuous relaxation involving probability distributions $\left\{ {\bf p}_n\right\}_{n=1}^N$
\begin{equation}
\min_{\left\{ {\bf p}_n \geq 0,~{\bf 1}_{I_n}^T {\bf p}_n=1\right\}_{n=1}^N} \left( \left({\bf p}_1^{T} {\bf A}_1\right) * \cdots *  \left({\bf p}_N^{T} {\bf A}_N\right) \right) {\bf 1}_R. \label{eqn:relaxation}
\end{equation}
\label{theorem:equiv}
\end{theorem}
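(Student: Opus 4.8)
The plan is to show that the discrete problem $\min_{i_1,\ldots,i_N}{\bf {\cal T}}(i_1,\ldots,i_N)$ and the continuous program (\ref{eqn:relaxation}) attain the same optimal value, and that the argument is constructive, so that a solution of one yields a solution of the other. First I would rewrite the relaxed objective componentwise as the \emph{multilinear extension} of the tensor,
\[
f({\bf p}_1,\ldots,{\bf p}_N) := \left(\left({\bf p}_1^{T}{\bf A}_1\right)*\cdots*\left({\bf p}_N^{T}{\bf A}_N\right)\right){\bf 1}_R = \sum_{r=1}^R \prod_{n=1}^N \left({\bf p}_n^{T}{\bf A}_n(:,r)\right) = \sum_{i_1,\ldots,i_N} {\bf {\cal T}}(i_1,\ldots,i_N)\,{\bf p}_1(i_1)\cdots{\bf p}_N(i_N),
\]
from which two observations are immediate: (i) at a vertex tuple ${\bf p}_n={\bf e}_{i_n}$, $f$ collapses to ${\bf {\cal T}}(i_1,\ldots,i_N)$, so every index tuple is feasible for (\ref{eqn:relaxation}) and hence the relaxed optimal value is $\le \min_{i_1,\ldots,i_N}{\bf {\cal T}}(i_1,\ldots,i_N)$ (the easy direction); and (ii) with all blocks but the $n$-th frozen, $f$ is \emph{affine} in ${\bf p}_n$, namely $f={\bf p}_n^{T}\,{\bf A}_n{\bf d}_{-n}^{T}$ with ${\bf d}_{-n} := \substack{*\\m \neq n}\left({\bf p}_m^{T}{\bf A}_m\right)$, exactly the object appearing in the coordinate-descent discussion above.

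For the reverse inequality I would argue by a rounding (vertex-chasing) induction. The feasible set $\Delta_{I_1}\times\cdots\times\Delta_{I_N}$, with $\Delta_{I} := \{{\bf p}\ge 0:{\bf 1}_{I}^{T}{\bf p}=1\}$, is compact and $f$ is continuous, so a minimizer $({\bf p}_1^\star,\ldots,{\bf p}_N^\star)$ exists. The claim is that after processing block $n$ we may assume ${\bf p}_1^\star,\ldots,{\bf p}_n^\star$ are all vertices while $f$ has not increased. Indeed, to process block $n$ we freeze the other blocks; by (ii), $f$ restricted to ${\bf p}_n$ is a linear function on the simplex $\Delta_{I_n}$, so its minimum over $\Delta_{I_n}$ is attained at some vertex ${\bf e}_{i_n}$, and replacing ${\bf p}_n^\star$ by ${\bf e}_{i_n}$ cannot increase $f$ and does not disturb the blocks already rounded. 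After $N$ steps we obtain an index tuple $(i_1,\ldots,i_N)$ with ${\bf {\cal T}}(i_1,\ldots,i_N)=f({\bf e}_{i_1},\ldots,{\bf e}_{i_N})\le f({\bf p}_1^\star,\ldots,{\bf p}_N^\star)$, i.e., the discrete optimum is $\le$ the relaxed optimum. Combining with (i) gives equality, and the rounding steps furnish an explicit map from any (even approximate) relaxed solution to a discrete one, which is the sense in which the two problems are ``equivalent.''

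The step requiring the most care is the vertex-chasing induction: one must check that rounding a single block at a time is legitimate even though the not-yet-rounded blocks are still arbitrary points of their simplices. This works precisely because $f$ is affine in the \emph{current} block regardless of the values of the remaining blocks, so the elementary fact ``a linear function on a simplex attains its minimum at a vertex'' applies at every step, and earlier replacements are never undone since each step touches only one block; it is worth spelling this out as a one-line induction (after $k$ steps the first $k$ blocks are vertices and $f$ has not increased). I would also note that nothing in the argument uses non-negativity of the factors or minimality of $R$: the same reasoning shows that the largest-element problem is equivalent to (\ref{eqn:relaxation}) with $\min$ replaced by $\max$, and that the equivalence persists for arbitrary polyadic decompositions, including dense/full-rank (unstructured) tensor models, matching the discussion in the text.
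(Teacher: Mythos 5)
Your proposal is correct and takes essentially the same route as the paper: both exploit the fact that the objective is affine in each block ${\bf p}_n$ with the others frozen, and round one block at a time to a vertex of its simplex without increasing the cost. Your write-up is slightly more explicit (the multilinear-extension identity, the existence of a minimizer by compactness, and the "easy direction" that vertices are feasible), but the core argument is the paper's block-wise rounding.
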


\begin{proof}
Let $\left\{ \check {\bf p}_n \right\}_{n=1}^N$ denote an optimal solution to (\ref{eqn:relaxation}), and let $\check {\bf q}_{-n} := {\bf A}_n \left( \substack{*\\m \neq n} {\bf A}_m^T \check{\bf p}_m \right)$. Note that the minimum of (\ref{eqn:relaxation}) is then equal to $\check {\bf p}_n^T \check {\bf q}_{-n}$.
The minimum of the inner product $\check {\bf p}_n^T \check {\bf q}_{-n}$ is clearly attained when $\check {\bf p}_n$ is a Kronecker delta that selects a minimum element of $\check {\bf q}_{-n}$ (there might be multiple minimal elements in $\check {\bf q}_{-n}$ that happen to be exactly equal). By virtue of optimality, $\check {\bf p}_n$ cannot combine non-minimal elements of $\check {\bf q}_{-n}$, because that would clearly increase the cost, contradicting optimality. Thus  $\check {\bf p}_n$ generates a convex combination of the possibly multiple equivalent minimal elements of $\check {\bf q}_{-n}$. But because the latter are exactly equal, the same cost is produced by putting all the mass in only one of them. 

Thus, given an optimal solution $\left\{ \check {\bf p}_n \right\}_{n=1}^N$ of (\ref{eqn:relaxation}), we can always round $\check {\bf p}_1$ so that it is integral, without loss of optimality. This yields another optimal solution of (\ref{eqn:relaxation}) to which we can apply the same argument to round $\check {\bf p}_2$ this time, and so on. The proof is therefore complete. 
\end{proof}

\begin{theorem} Finding the smallest element of a $N$-way tensor in Tucker, MLSVD, TT, or TR form is likewise equivalent to the corresponding continuous relaxation. 
\end{theorem}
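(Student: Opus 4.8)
The plan is to reduce this to exactly the structure exploited in the proof of Theorem~\ref{theorem:equiv}, namely that the tensor entry is a \emph{multilinear} function of the mode-wise selection vectors. First I would write each of the four models in ``selection form.'' Picking index $i_n$ in mode $n$ is the same as left-multiplying the mode-$n$ factor by the standard basis (Kronecker delta) vector $\mathbf{e}_{i_n} \in \mathbb{R}^{I_n}$. For Tucker/MLSVD, $\mathcal{T}(i_1,\dots,i_N) = \sum_{r_1,\dots,r_N} \mathcal{G}(r_1,\dots,r_N)\,(\mathbf{e}_{i_1}^{T}\mathbf{A}_1)_{r_1}\cdots(\mathbf{e}_{i_N}^{T}\mathbf{A}_N)_{r_N}$, i.e. the core $\mathcal{G}$ contracted with the rows $\mathbf{e}_{i_n}^{T}\mathbf{A}_n$; for TT, $\mathcal{T}(i_1,\dots,i_N) = \big(\sum_{j}\mathbf{e}_{i_1}(j)\,\mathbf{G}_1(:,j,:)\big)\cdots\big(\sum_{j}\mathbf{e}_{i_N}(j)\,\mathbf{G}_N(:,j,:)\big)$, where $\mathbf{G}_n(:,j,:)$ are the lateral slices of the $n$-th carriage and the boundary sizes make the product a scalar; for TR the same expression holds with an outer trace. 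The natural continuous relaxation in each case is obtained by replacing every $\mathbf{e}_{i_n}$ by a probability vector $\mathbf{p}_n$ over the simplex $\Delta_{I_n}=\{\mathbf{p}_n\ge 0,\ \mathbf{1}_{I_n}^{T}\mathbf{p}_n=1\}$ --- equivalently, replacing the row $\mathbf{e}_{i_n}^{T}\mathbf{A}_n$ by $\mathbf{p}_n^{T}\mathbf{A}_n$, or the slice selection by the convex combination $\sum_{j}\mathbf{p}_n(j)\,\mathbf{G}_n(:,j,:)$ --- and minimizing the resulting scalar over $\{\mathbf{p}_n\in\Delta_{I_n}\}_{n=1}^N$.

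Next I would observe that in all four expressions the relaxed objective $f(\mathbf{p}_1,\dots,\mathbf{p}_N)$ is \emph{separately linear in each $\mathbf{p}_n$} with the remaining blocks held fixed: contracting a fixed multilinear form (the core, or the fixed part of a chain/ring of matrix products) against one argument is a linear operation in that argument, and this holds irrespective of whether the core is diagonal, dense, or realized as a matrix-product chain, and irrespective of orthonormality of the factors. Hence, for fixed $\{\mathbf{p}_m\}_{m\neq n}$, the restricted problem is $\min_{\mathbf{p}_n\in\Delta_{I_n}}\mathbf{p}_n^{T}\mathbf{c}_{-n}$ for a vector $\mathbf{c}_{-n}$ that depends on the fixed blocks and on the particular model; its optimum is attained at a vertex of the simplex, i.e. at some $\mathbf{e}_{i_n}$, and if several vertices tie for the minimum then any convex combination of them, in particular any single one, attains the same value.

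Finally I would run the same sequential-rounding induction as in Theorem~\ref{theorem:equiv}: starting from an optimal $\{\check{\mathbf{p}}_n\}_{n=1}^N$ of the relaxation, replace $\check{\mathbf{p}}_1$ by a minimizing vertex of $\Delta_{I_1}$ (possible without loss of optimality by the previous paragraph), then apply the same step to $\check{\mathbf{p}}_2$, and so on through $\check{\mathbf{p}}_N$, arriving at an integral point of the product of simplices with unchanged objective value --- that is, an optimal hard selection. Therefore the relaxed minimum equals the discrete minimum for each of the Tucker, MLSVD, TT, and TR representations. The only work here is bookkeeping --- casting each model in the selection form above and checking per-block linearity, noting in particular that for MLSVD orthonormality is irrelevant and that for TT/TR the rank-one end conditions (resp. the trace) preserve linearity in each slice; there is no genuine obstacle beyond Theorem~\ref{theorem:equiv}, since multilinearity of the entry map in the selection vectors is exactly the property the rounding argument needs, and it is common to all polyadic-type models.
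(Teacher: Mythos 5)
Your proof is correct, and it rests on exactly the property the paper exploits: the tensor entry is multilinear in the mode-wise selection vectors, so each block subproblem of the relaxation is a linear program over a simplex whose optimum sits at a vertex, and the sequential rounding of Theorem~\ref{theorem:equiv} goes through unchanged. The only difference is packaging. The paper's proof is a one-line reduction: it observes that Tucker, MLSVD, TT, and TR are all sums of outer products, hence expressible as (non-canonical, possibly high-rank, non-unique) polyadic decompositions, and that the proof of Theorem~\ref{theorem:equiv} never used minimality, uniqueness, or any bound on the number of rank-one terms --- so that theorem applies verbatim. You instead stay in each model's native parametrization (core contraction, carriage chain, ring trace), verify per-block linearity there, and re-run the rounding induction directly. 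Your version is more self-contained and makes explicit the bookkeeping the paper leaves implicit (in particular why orthonormality of MLSVD factors and the boundary/trace conditions of TT/TR are irrelevant); the paper's version is shorter because it reuses the already-proven CPD case as a black box. Both are valid and yield the same conclusion.
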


\begin{proof}
All these decomposition models are fundamentally sums of outer products, i.e., rank-one tensors. Thus they can always be put in the form of multilinear (polyadic) decomposition, albeit such decomposition will not necessarily be {\em canonical}, i.e., of minimal rank, nor will it be unique. Notice however that our proof of the previous Theorem does not assume anything about uniqueness or the number of components in the decomposition. Hence it applies to all these models. As a special case, it applies to full-rank tensors; but then there is no way to avoid the curse of dimensionality, i.e., exponential complexity in gradient computations, see below. Thus it is low-rankness that saves the day. 
\end{proof}

\vspace{-0.45cm}
\section{Methods}

Theorem \ref{theorem:equiv} opens the door to derivative-based optimization. The gradient of the cost function, 
\[f({\bf p}_1, \cdots, {\bf p}_N) = \left( \left({\bf p}_1^{T} {\bf A}_1\right) * \cdots *  \left({\bf p}_N^{T} {\bf A}_N\right) \right) {\bf 1}_R \\
\]
\[
=\left( \substack{*\\n}~ {\bf p}_n^T {\bf A}_n \right) {\bf 1}_R, \] 
with respect to ${\bf p}_n$ is given by 
\[
\nabla_{{\bf p}_n} f = {\bf q}_{-n} := {\bf A}_n \left( \substack{*\\m \neq n} {\bf p}_m^T {\bf A}_m \right)^T.
\]
This is very easy to compute. When ${\bf p}_n^T {\bf A}_n$ contains no zeros, we only have to compute  $\substack{*\\\ m} {\bf p}_m^T  {\bf A}_m$ once, at a cost of $\sum_{n=1}^N I_n R + (N-1)R$ flops, and then divide each time by the leave-one-out factor to compute all the Hadamard products needed. This is followed by a final matrix-vector multiplication for each $n$. The total cost is thus $2 \sum_{n=1}^N I_n R + (2N-1)R$ flops to compute all gradients. To understand what happens when zeros appear, it suffices to consider each column (latent dimension) separately, as multiplications and division happen at the column (rank-one factor) level. If the first element of one and only one of the ${\bf p}_n^T {\bf A}_n$ is zero, then we should also compute the nonzero leave-one-out product of all other first elements and use that only for the gradient update of the ${\bf p}_n$ which generated the said zero element. If the first element of more than one ${\bf p}_n^T {\bf A}_n$ is zero, then all leave-one-out products are zero, hence there is no need to even consider the first dimension in the gradient update, because the corresponding component of the gradient is zero. This implies that for each of the latent dimensions, $r\in\left\{1,\cdots,R\right\}$, we only need to detect if there is a single zero and compute the product of the nonzero elements. Hence, even in the presence of zeros, the worst-case complexity is linear in $N$, as stated above.   

We need to enforce the probability simplex constraints, and projected gradient descent (PGD) a natural choice for this purpose; but we have several other choices. Among them, the Frank-Wolfe algorithm has certain advantages. In our particular context,  computing the minimum inner product of the gradient over the feasible set separates across modes, and for each mode it boils down to finding the minimum element of the corresponding part of the gradient vector. Thus Frank-Wolfe bypasses projection onto the simplex, which requires an iterative algorithm. Frank-Wolfe applied to nonconvex cost functions with convex constraints enjoys nice convergence guarantees \cite{DBLP:journals/corr/Lacoste-Julien16} -- note that our multilinear cost function is nonconvex, but the probability simplex constraints are convex. 

Another way to bypass the projection onto the simplex is to introduce what is sometimes referred to as the {\em Hadamard parametrization} of a probability distribution ${\bf p}(i)=({\bf s}(i))^2$, where ${\bf p}$ is the Hadamard product of ${\bf s}$ with itself \cite{HadamardParametrization}; or the {\em amplitude parametrization} ${\bf p}(i)=|{\bf s}(i)|^2$ of the quantum literature, where ${\bf s}$ is complex and ${\bf p}$ is the Hadamard product of ${\bf s}$ and the conjugate of ${\bf s}$. With these parametrizations ${\bf p}$ always has non-negative elements and sums up to one if and only if $||{\bf s}||_2=1$. Thus the simplex constraint is transformed to a unit sphere constraint. One advantage of this is that projection of any vector on the unit sphere only involves normalization, i.e., ${\bf s}=\frac{\bf s}{||{\bf s}||_2}$. A drawback is that the unit sphere is not a convex set, which complicates convergence analysis. The third option when it comes to parametrizing the simplex constraint is to use ${\bf p}(i)=\frac{e^{{\bf v}(i)}}{\sum_{j=1}^I e^{{\bf v}(j)}}$, where vector ${\bf v}$ is unconstrained. This parametrization emerges from mirror descent using negative entropy, but it can also be motivated from the viewpoint of turning simplex-constrained optimization into an unconstrained problem on which gradient descent can be applied. Following the latter viewpoint and applying the chain rule
\[
\frac{\partial f}{\partial {\bf v}_n(i)} = {\bf p}_n(i) {\bf g}_n(i) - {\bf p}_n(i) {\bf p}_n^T {\bf g}_n,
\]
where ${\bf g}_n := \nabla_{{\bf p}_n} f$. 

The aforementioned simplex parametrization approaches can model any finite distribution. In our particular context, we know that the sought distribution can be restricted to be unimodal -- after all, the final solution can be ``rounded'' to a Kronecker delta without loss of optimality per the proof of Theorem \ref{theorem:equiv}. Towards this end, we can use as potential map ${\bf v}(i)$ a discrete analog of the exponent of a Gaussian, namely 
\[
{\bf v}(i)=-\left(\frac{i-\mu}{\sigma}\right)^2,
\]
where $\mu \in \mathbb{R}$ is a location parameter and $\sigma \in \mathbb{R}$ controls the spread of the distribution. Using the chain rule again, the derivatives for updating these two ``root'' parameters are
\[
\frac{\partial f}{\partial \mu}= \sum_{i=1}^I \frac{\partial f}{\partial {\bf v}_n(i)} \frac{2(i-\mu)}{\sigma^2}
\]
and
\[
\frac{\partial f}{\partial \sigma}= \sum_{i=1}^I \frac{\partial f}{\partial {\bf v}_n(i)} \frac{2(i-\mu)^2}{\sigma^3}. 
\]

Frank-Wolfe, PGD, and the exponential parametrization work well for random instances of the partition problem, subject to suitable tuning of the step-size related parameters. Pseudo-code listings of these four algorithms are provided as Algorithms \ref{algo:FW}, \ref{algo:PSPGD}, \ref{algo:EXP}, \ref{algo:DGP} 
for the Frank-Wolfe, PGD, exponential, and ``discrete Gaussian'' parametrization, respectively. For the partition problem, the Frank-Wolfe Algorithm \ref{algo:FW} appears to offer the best performance and the lowest complexity in our proof-of-concept experiments. Each of these algorithms is useful in different application contexts, as we will see. Algorithm \ref{algo:DGP} uses the most compact parametrization of all, and is often the most efficient in terms of iterations needed for convergence.  

For the Frank-Wolfe method, for which there is proof that the algorithm attains a stationary point at a rate of ${\cal O}(\frac{1}{\sqrt t})$, where $t$ is the number of iterations, computing the associated curvature constant is non-trivial. In the matrix case ($N=2$), the Lipschitz constant of the gradient of the cost function for our problem, which can be used to bound the curvature constant in Frank-Wolfe \cite{DBLP:journals/corr/Lacoste-Julien16} is determined by the principal singular value of ${\bf A}_1 {\bf A}_2^T$. To see this, note that for $N=2$ the cost function can be written as
\[
f({\bf p}_1,{\bf p}_2)= {\bf p}_1^T {\bf A}_1 {\bf A}_2^T {\bf p}_2,
\]
the gradient of which is given by
\[
\nabla_{\left[ {\bf p}_1^T {\bf p}_2^T \right]^T} f = \begin{bmatrix}
{\bf A}_1 {\bf A}_2^T {\bf p}_2\\
{\bf A}_2 {\bf A}_1^T {\bf p}_1
\end{bmatrix}.
\]
It follows that
\[
\left|\left| \nabla_{\left[ {\bf p}_1^T {\bf p}_2^T \right]^T} f - \nabla_{\left[ {\bf q}_1^T {\bf q}_2^T \right]^T} f\right|\right|_2^2 = \left|\left| \begin{bmatrix}
{\bf A}_1 {\bf A}_2^T ({\bf p}_2 - {\bf q}_2) \\
{\bf A}_2 {\bf A}_1^T ({\bf p}_1 - {\bf q}_1) 
\end{bmatrix} \right|\right|_2^2,
\]
\[
\leq \left(\sigma_{\text{max}}\left( {\bf A}_1 {\bf A}_2^T \right)\right)^2 \left|\left| \begin{bmatrix}
{\bf p}_2 - {\bf q}_2 \\
{\bf p}_1 - {\bf q}_1 
\end{bmatrix} \right|\right|_2^2,
\]
hence
\[
\left|\left| \nabla_{\left[ {\bf p}_1^T {\bf p}_2^T \right]^T} f - \nabla_{\left[ {\bf q}_1^T {\bf q}_2^T \right]^T} f\right|\right|_2 \hspace{-2mm}\leq \sigma_{\text{max}}\left( {\bf A}_1 {\bf A}_2^T \right) \left|\left| \begin{bmatrix}
{\bf p}_2 - {\bf q}_2 \\
{\bf p}_1 - {\bf q}_1 
\end{bmatrix} \right|\right|_2.
\]
This is unfortunate, for computing $\sigma_{\text{max}}\left( {\bf A}_1 {\bf A}_2^T \right)$ is more costly than finding the smallest element of the product ${\bf A}_1 {\bf A}_2^T$ by brute-force. In the $N$-way case, the situation is more complicated. Note that for $N=2$ the difference of the gradients is linear in the difference of the mode distributions. This is not true when $N>2$.

\subsection{Complexity}

Assuming $I_n=I$, $\forall n$ for brevity, the cost of computing all gradients with respect to the mode distributions ${\bf p}_n$ is of order $NIR$, as we have seen. The Frank-Wolfe algorithm maintains this complexity order per iteration. When employing the exponential parametrization of the mode distributions or the ``discrete Gaussian'' parametrization, the extra gradient back-propagation steps have complexity $NI$, thus again maintaining overall per-iteration complexity of  order $NIR$. The actual complexity of the overall algorithm depends on a number of critical parameters, including the choice of gradient stepsize, the maximum number of iterations allowed (``hard-stop''), and the initialization used -- intelligent / application-specific or random. We are using these algorithms to tackle NP-hard problems, so initialization does matter. In certain applications, such as parity check decoding, there is a natural initialization that we can use (the channel output bits), but in others, like the sign retrieval application that we will consider in some detail, there is no ``natural'' initialization that we can use. Through experimentation, we have found that one initialization that works well in many cases is to use the DP-based algorithm to compute an optimal solution for each rank-one factor separately, and then pick among those the one that is best for the higher-rank tensor minimization problem. This often gives a good ``universal'' (application-agnostic) initialization the complexity of which is of order $NIR+NR^2$. 

Tuning the gradient stepsize is not difficult in our experience, but it is application-dependent. The maximum number of gradient iterations is set between $300$ and $3,000$ in all our experiments, even for high-order problems (high $N$). Thus complexity is always polynomial of order $NIR$, but there is no guarantee that the optimal solution will be found. Notwithstanding, as we show in our experiments, the optimization performance attained is often state-of-art.  


\begin{algorithm}[h]
	\caption{Min CPD via Frank-Wolfe w/ adaptive stepsize}
	\label{algo:FW}
 \begin{algorithmic}
\STATE\hspace{-1em}\textbf{Input:} $\left\{ {\bf A}_n \right\}_{n=1}^N$, \text{curvature parameter $C$}.
 
\STATE\hspace{0.5em}1. \small{Initialize mode distributions ${\bf p}_n$ randomly,} $\forall n \in \left\{1,\cdots,N\right\}$.

\STATE\hspace{0.5em}2. \small{repeat}

\STATE\hspace{0.5em}3. \small{Compute mode gradients, update directions, adaptive stepsize:}

\STATE\hspace{1.5em} Set $g_t=0$

\STATE\hspace{1.5em} \small{for n=1 to N}

\STATE\hspace{2.5em} \small{Compute $\nabla_{{\bf p}_n} f = {\bf A}_n \left( \substack{*\\m \neq n} {\bf p}_m^T {\bf A}_m \right)^T$}

\STATE\hspace{2.5em} \small{Find $v_n^*=\min_{i_n=1}^{I_n} \nabla_{{\bf p}_n} f(i_n)$}  

\STATE\hspace{2.5em} \small{Set ${\cal M}_n=\left\{i_n ~|~ \nabla_{{\bf p}_n} f(i_n)=v_n^*\right\}$.}

\STATE\hspace{2.5em} \small{Set ${\bf d}_n(i_n)=1/|{\cal M}_n|, \forall i_n \in {\cal M}_n$}

\STATE\hspace{2.5em} \small{Accumulate $g_t = g_t + ({\bf p}_n - {\bf d}_n)^T \nabla_{{\bf p}_n} f$}

\STATE\hspace{1.5em} \small{end for}

\STATE\hspace{1.5em} \small{Set $\lambda_t = \min(\frac{g_t}{C},1)$}

\STATE\hspace{0.5em}4. \small{Update mode distributions:}

\STATE\hspace{1.5em} \small{for n=1 to N}

\STATE\hspace{2.5em} \small{${\bf p}_n = (1-\lambda_t) {\bf p}_n + \lambda_t {\bf d}_n$}

\STATE\hspace{1.5em} \small{end for}

\STATE\hspace{0.5em}5. \small{until convergence criterion met} 
 \STATE\hspace{-1em} \textbf{Output:} $i_n^* \in \arg \max_{i \in I_n} {\bf p}_n(i)$.      
\end{algorithmic}
\end{algorithm}

\vspace*{.25in}

\begin{algorithm}[h]
	\caption{Min CPD via simplex PGD with momentum}
	\label{algo:PSPGD}
\begin{algorithmic}
\STATE	\hspace{-1em}\textbf{Input:} $\left\{ {\bf A}_n \right\}_{n=1}^N$, \text{step size, momentum parameters} $\lambda, \beta$, resp.

\STATE\hspace{1em}1. \small{Init. mode distributions ${\bf p}_n$ randomly, $\forall n \in \left\{1,\cdots,N\right\}$.}

\STATE\hspace{1em}2. \small{Initialize gradients ${\bf g}_n={\bf 0}_{I_n \times 1}$, $\forall n \in \left\{1,\cdots,N\right\}$.}

\STATE\hspace{1em}3. \small{repeat}

\STATE\hspace{1em}4. \small{Compute mode gradients, update mode distributions:}

\STATE\hspace{2em} \small{for n=1 to N}

\STATE\hspace{3em} \small{Compute $\nabla_{{\bf p}_n} f = {\bf A}_n \left( \substack{*\\m \neq n} {\bf p}_m^T {\bf A}_m \right)^T$}

\STATE\hspace{3em} \small{Accumulate momentum ${\bf g}_n = (1-\beta) {\bf g}_n + \beta \nabla_{{\bf p}_n} f$}

\STATE\hspace{3em} \small{Update and project onto simplex ${\bf p}_n = {\cal P}_{\Omega} \left({\bf p}_n - \lambda {\bf g}_n\right)$}

\STATE\hspace{2em} \small{end for}

\STATE\hspace{1em}5. \small{until convergence criterion met} 

 \STATE 	\hspace{-1em}\textbf{Output:} $i_n^* \in \arg \max_{i \in I_n} {\bf p}_n(i)$.      
\end{algorithmic}
\end{algorithm}

\vspace*{.25in}

\begin{algorithm}[h]
	\caption{Min CPD via Exponential parametrization}
	\label{algo:EXP}
\begin{algorithmic}
\STATE	\hspace{-1em}\textbf{Input:} $\left\{ {\bf A}_n \right\}_{n=1}^N$, \text{step size parameter} $\lambda$.

\STATE	\hspace{1em}1. \small{Init. ${\bf v}_n$ randomly from i.i.d. normal distribution.}

\STATE\hspace{1em}2. \small{repeat}

\STATE\hspace{1em}3. \small{Compute mode gradients, update mode distributions:}

\STATE\hspace{2em} \small{for n=1 to N}

\STATE\hspace{3em} \small{Compute ${\bf g}_n := \nabla_{{\bf p}_n} f = {\bf A}_n \left( \substack{*\\m \neq n} {\bf p}_m^T {\bf A}_m \right)^T$}

\STATE\hspace{3em} \small{Compute $\frac{\partial f}{\partial {\bf v}_n(i)} = {\bf p}_n(i) {\bf g}_n(i) - {\bf p}_n(i) {\bf p}_n^T {\bf g}_n, \forall i$}

\STATE\hspace{3em} \small{Update ${\bf v}_n(i) = {\bf v}_n(i) - \lambda \frac{\partial f}{\partial {\bf v}_n(i)}, \forall i$}

\STATE\hspace{3em} \small{Update ${\bf p}_n(i)=\frac{e^{{\bf v}_n(i)}}{\sum_{j=1}^I e^{{\bf v}_n(j)}}, \forall i$}

\STATE\hspace{2em} \small{end for}

\STATE\hspace{1em}4. \small{until convergence criterion met} 

\STATE 	\textbf{Output:} $i_n^* \in \arg \max_{i \in I_n} {\bf p}_n(i)$.      
\end{algorithmic}
\end{algorithm}

\newpage
\begin{algorithm}[h]
	\caption{Min CPD via ``discrete Gaussian'' parametrization}
	\label{algo:DGP}
\begin{algorithmic}
\STATE	\hspace{-1em}\textbf{Input:} $\left\{ {\bf A}_n \right\}_{n=1}^N$, \text{step size parameter} $\lambda$.

\STATE	\hspace{1em}1. \small{Init. ${\bf v}_n$ randomly from i.i.d. normal distribution.}

\STATE\hspace{1em}2. \small{repeat}

\STATE\hspace{1em}3. \small{Compute mode gradients, update mode distributions:}

\STATE\hspace{2em} \small{for n=1 to N}

\STATE\hspace{3em} \small{Compute ${\bf g}_n := \nabla_{{\bf p}_n} f = {\bf A}_n \left( \substack{*\\m \neq n} {\bf p}_m^T {\bf A}_m \right)^T$}

\STATE\hspace{3em} \small{Compute $\frac{\partial f}{\partial {\bf v}_n(i)} = {\bf p}_n(i) {\bf g}_n(i) - {\bf p}_n(i) {\bf p}_n^T {\bf g}_n, \forall i$}

\STATE\hspace{3em} \small{Compute $\frac{\partial f}{\partial \mu_n}= \sum_{i=1}^I \frac{\partial f}{\partial {\bf v}_n(i)} \frac{2(i-\mu_n)}{\sigma_n^2}$}

\STATE\hspace{3em} \small{Compute $\frac{\partial f}{\partial \sigma_n}= \sum_{i=1}^I \frac{\partial f}{\partial {\bf v}_n(i)} \frac{2(i-\mu_n)^2}{\sigma_n^3}$}

\STATE\hspace{3em} \small{Update $\mu_n = \mu_n - \lambda \frac{\partial f}{\partial \mu_n}$}

\STATE\hspace{3em} \small{Update $\sigma_n = \sigma_n - \lambda \frac{\partial f}{\partial \sigma_n}$}

\STATE\hspace{3em} \small{Update ${\bf v}_n(i) = -\left(\frac{i-\mu_n}{\sigma_n}\right)^2, \forall i$}

\STATE\hspace{3em} \small{Update ${\bf p}_n(i)=\frac{e^{{\bf v}_n(i)}}{\sum_{j=1}^I e^{{\bf v}_n(j)}}, \forall i$}

\STATE\hspace{2em} \small{end for}

\STATE\hspace{1em}4. \small{until convergence criterion met} 

\STATE 	\textbf{Output:} $i_n^* \in \arg \max_{i \in I_n} {\bf p}_n(i)$.      
\end{algorithmic}
\end{algorithm}

\section{How expressive is the class of problems considered?}
We have seen that the class of problems that can be viewed as special instances of finding the minimum element of a tensor from its rank-one factors is broad -- it contains the partition problem, and thus any NP-complete problem can be transformed in polynomial time to our problem of interest. Still, such transformation may not be obvious, and one wonders whether there exist broadly useful classes of NP-hard problems that are directly amenable, or easily transformable, to instances of the problem of interest. As we will see next, the answer is affirmative for various important optimization models that are frequently used in engineering.  

\subsection{Integer Linear Programming} 

Consider the Integer Linear Programming (ILP) problem
\begin{equation}
    \begin{split}
    \min_{\mathbf{x}\in\mathbb{S}^N}& ~\mathbf{c}^T\mathbf{x}\\
    \text{s.t.}~&~ \mathbf{H x}\leq \mathbf{b},
    \end{split}
\end{equation}
where $\mathbf{H}\in\mathbb{R}^{M\times N}$, $\mathbf{b}\in\mathbb{R}^{M}$, $\mathbf{c}\in\mathbb{R}^{N}$, and $\mathbb{S}^N$ is a finite subset of $\mathbb{R}^N$. $\mathbb{S}^N$ will typically be a finite lattice, i.e., the Cartesian product of finite subsets of $\mathbb{R}$. In the supplementary material, we show that it is possible to transform this ILP problem to minimizing  
\[
      \min_{\mathbf{x}\in\mathbb{S}^N} \left\{ e^{t \mathbf{c}^T \mathbf{x}} + 
 ~\sum_{m=1}^{M} \lambda_m e^{t(\mathbf{c}+\rho{\mathbf{h}}_m)^T\mathbf{x}} \right\},
\]
where $\lambda_m := e^{- t \rho \mathbf{b}(m)}$, for sufficiently large (problem-specific) $\rho$ and $t$. Every exponential inside the brackets is a rank-one tensor (a separable function of the variables in ${\bf x}$), and thus the above is a tensor of order $N$ and rank at most $M+1$, which is very low compared to the maximal possible rank ($I^{N-1}$ when $\mathbb{S}^N = {\cal I}^N$). Here ${\cal I}^N$ is the Cartesian product of $N$ copies of ${\cal I}$ (note that $\mathbb{S}^N$ need not be a Cartesian product in general -- we are using slightly overloaded notation).

\subsection{Integer Least Squares}

Consider the integer least squares (ILS) problem 
\begin{equation}
    \min_{\mathbf{x}\in\mathbb{S}^N} \left|\left|\mathbf{H} \mathbf{x} -  \mathbf{b} \right|\right|_2^2,
\end{equation}
where $\mathbf{H}\in\mathbb{R}^{M\times N}$, $\mathbf{b}\in\mathbb{R}^{M}$. Let $\mathbf{G} := \mathbf{H}^T \mathbf{H}$ and $\mathbf{c}:= 2 \mathbf{H}^T \mathbf{b}$. Then, it is easy to see that 
\begin{equation}
    \begin{split}
&~~~~~~~~~~\min_{\mathbf{x}\in\mathbb{S}^N} \left|\left|\mathbf{H} \mathbf{x} -  \mathbf{b} \right|\right|_2^2  \equiv \min_{\mathbf{x}\in\mathbb{S}^N} \mathbf{x}^T \mathbf{G} \mathbf{x} - \mathbf{c}^T \mathbf{x} \\
&\phantom{xx}\equiv \min_{\mathbf{x}\in\mathbb{S}^N} \sum_{n=1}^N \sum_{m=1}^N \mathbf{G}(n,m) \mathbf{x}(n) \mathbf{x}(m) -  \sum_{n=1}^N \mathbf{c}(n) \mathbf{x}(n).  
\raisetag{37pt}
\label{eqn:ILS}
    \end{split}
\end{equation}
Note that a term of the form\footnote{We may assume without loss of generality that $m \geq n$. Here, $\mathbf{x}^0$ is element-wise exponentiation, i.e., a vector of all ones.}
\[
\gamma \mathbf{x}(n) \mathbf{x}(m) = \gamma \mathbf{x}^0(1) \cdots \mathbf{x}^0(n-1) \mathbf{x}(n) \mathbf{x}^0(n+1) \cdots 
\]
\[
\mathbf{x}^0(m-1) \mathbf{x}(m) \mathbf{x}^0(m+1) \cdots \mathbf{x}^0(N)
\]
is separable (rank-one), and so is $\gamma (\mathbf{x}(n))^2$.
It follows that the cost function in \eqref{eqn:ILS} has rank at most $\frac{N(N-1)}{2}+N$, which is again very low compared to the maximal possible rank ($I^{N-1}$ when $\mathbb{S}^N = {\cal I}^N$). Note that here we have used the symmetry of $\mathbf{G} := \mathbf{H}^T \mathbf{H}$ to reduce the required number of rank-one terms. 

\subsection{Integer Quadratic Programming} 

Consider the integer indefinite quadratic problem $\min_{\mathbf{x}\in\mathbb{S}^N} \mathbf{x}^T \mathbf{Q} \mathbf{x}$, where $\mathbf{Q}$ is not necessarily positive semidefinite, or even symmetric. Note that we can always symmetrize without loss of generality, as $\mathbf{x}^T \mathbf{Q} \mathbf{x} = (\mathbf{x}^T \mathbf{Q} \mathbf{x})^T = \mathbf{x}^T \mathbf{Q}^T \mathbf{x}$, and thus $\mathbf{x}^T \mathbf{Q} \mathbf{x} = \mathbf{x}^T \frac{(\mathbf{Q} + \mathbf{Q}^T)}{2} \mathbf{x} = \mathbf{x}^T \mathbf{G} \mathbf{x}$, with $\mathbf{G} := \frac{(\mathbf{Q} + \mathbf{Q}^T)}{2}$. It follows that integer quadratic programming corresponds to finding a minimum element of a CPD model of rank $\frac{N(N-1)}{2}$. 

\subsection{Mixed integer programming: Sign retrieval} So far we have considered optimization problems with purely categorical (``integer'') variables. As an example of a mixed integer problem that falls under our framework, we next consider {\em sign retrieval} (e.g., see \cite{LESHEM2018463,suzuki2022compressing}). This is a special case of {\em phase retrieval} \cite{YPR-7078985,YPR-7559964}, and both have important applications in a broad range of disciplines, from optical imaging \cite{YPR-7078985} to wireless communication  -- where it can be used for channel estimation from coarse channel quality measurements \cite{Qiu-8253866}. The starting point of the sign retrieval problem is the measurement model
\[
{\bf y} = \left| {\bf A} {\bf x} + {\bf v} \right|,
\]
where ${\bf x} \in \mathbb{R}^{N \times 1}$ is a vector of unknowns to be estimated, ${\bf A} \in \mathbb{R}^{M \times N}$ with $M > N$ (typically $M >> N$) is known, ${\bf v}$ is additive white Gaussian noise, $|\cdot|$ takes the absolute value of its argument, and ${\bf y} \in \mathbb{R}_+^{M \times 1}$ is the sign-less vector of measurements. Treating ${\bf s} = \text{sign}({\bf y})$ and ${\bf x}$ as deterministic unknowns, maximum likelihood estimation amounts to 
\[
\min_{{\bf s} \in {\color{black} \{\pm 1\}}^{M \times 1}, ~ {\bf x} \in \mathbb{R}^{N \times 1}} ||{\bf D}({\bf s}) {\bf y} - {\bf A} {\bf x}||_2^2, 
\]
where ${\bf D}({\bf s})$ is a diagonal matrix holding the elements of ${\bf s}$ on its diagonal. The problem is separable with respect to the continuous parameters ${\bf x}$. That is, solving for ${\bf x}$ as a function of ${\bf s}$, ${\bf x} = ({\bf A}^T {\bf A})^{-1} {\bf A}^T {\bf D}({\bf y}) {\bf s}$ and substituting the result back into the cost function, we obtain
\[
\min_{{\bf s} \in {\color{black}\{\pm 1\}}^{M \times 1}} ||({\bf I} - {\bf A} ({\bf A}^T {\bf A})^{-1} {\bf A}^T) {\bf D}({\bf y}) {\bf s})||_2^2,  
\]
where we have used ${\bf D}({\bf s}) {\bf y} = {\bf D}({\bf y}) {\bf s}$. Expanding and using the idempotence of $({\bf I} - {\bf A} ({\bf A}^T {\bf A})^{-1} {\bf A}^T)$, we can further rewrite the problem as
\[
\min_{{\bf s} \in {\color{black} \{\pm 1\}}^{M \times 1}} {\bf s}^T {\bf Q} {\bf s},~~~ \text{with}~{\bf Q}:={\bf D}({\bf y}) ({\bf I} - {\bf A} ({\bf A}^T {\bf A})^{-1} {\bf A}^T) {\bf D}({\bf y}).
\]
It follows from the preceding subsection on integer quadratic programming that the sign retrieval problem corresponds to finding a minimum element of a CPD model of rank $\frac{M(M-1)}{2}$.  

\subsection{Maximum likelihood / minimum distance decoding of parity check codes over Galois Fields}
Consider a parity check code \cite{LinCostello} over GF(2) with $M \times N$ parity check matrix ${\bf C}$, where $M$ is the number of parity checks, $N$ is the codeword length, and the code rate is $\frac{K}{N}$, where $K:=N-M$. A certain $N \times 1$ binary vector ${\bf x} \in \left\{0,1\right\}^N$ is a valid codeword if and only if ${\bf C} {\bf x}={\bf 0}_{M \times 1}$ in GF(2) modulo 2 arithmetic, i.e., $\text{mod}({\bf C} {\bf x},2)={\bf 0}_{M \times 1}$ in real arithmetic. Assuming that the coded bits are transmitted over a memoryless binary symmetric channel (BSC) with cross-over probability $p<0.5$, or over an additive white Gaussian noise (AWGN) channel, maximum likelihood decoding reduces to minimum Hamming or Euclidean distance decoding, respectively. Noting that for $\left\{0,1\right\}$-encoding of the channel input ${\bf x}$ and output ${\bf y}$, Euclidean distance squared is equal to Hamming distance, we can write both using real arithmetic as   
\begin{equation}
\min_{{\bf x} \in \left\{0,1\right\}^N | \text{mod}({\bf C} {\bf x},2)={\bf 0}} ||{\bf y}-{\bf x}||_2^2.
\label{eq:MLPCC}
\end{equation}
Optimal decoding is an NP-hard problem for most ``irregular'' codes of current interest\footnote{E.g., Convolutional codes with short memory can be optimally and efficiently decoded using DP.}, including low-density parity check (LDPC) codes which can come close to attaining the Shannon limit. These are decoded using an iterative message passing technique known as belief propagation, which performs well for code graphs that are free of short loops \cite{IntroLDPC,Eleftheriou}. We will next show that the same optimal decoding problem can be approached in a very different way: as the problem of computing the minimum element of a low-rank tensor. Consider the following problem
\begin{equation}
\label{eq:MLPCCasminCPD}
\min_{{\bf x} \in \left\{0,1\right\}^N}\hspace{-1mm} - \sum_{m=1}^M (-1)^{{\bf C}(m,:) {\bf x}} + \frac{1}{e} \left( 1 + \frac{1}{N}\right)^{\sum_{n=1}^N ({\bf y}(n)-{\bf x}(n))^2}\hspace{-10mm}.
\end{equation}
Note that $(-1)^{{\bf C}(m,:) {\bf x}}=(-1)^{\sum_{n=1}^N {\bf C}(m,n) {\bf x}(n)}=\prod_{n=1}^N (-1)^{{\bf C}(m,n) {\bf x}(n)}$, which is a rank-one tensor. Likewise, $\left( 1 + \frac{1}{N}\right)^{\sum_{n=1}^N ({\bf y}(n)-{\bf x}(n))^2}=\prod_{n=1}^N \left( 1 + \frac{1}{N}\right)^{({\bf y}(n)-{\bf x}(n))^2}$ is separable, and thus a rank-one tensor. The overall cost function in (\ref{eq:MLPCCasminCPD}) is therefore a tensor of rank (at most) $M+1$. We have the following result.\\ 
\begin{proposition}
Solving (\ref{eq:MLPCCasminCPD}) is equivalent to solving (\ref{eq:MLPCC}). 
\end{proposition}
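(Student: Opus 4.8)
The plan is to show the two problems have identical minimizer sets by exploiting a clean separation of scales between the two terms of (\ref{eq:MLPCCasminCPD}). First I would rewrite the parity-check term: for each $m$, $(-1)^{{\bf C}(m,:){\bf x}}$ equals $+1$ when the $m$-th parity check is satisfied and $-1$ when it is violated, so $-\sum_{m=1}^M (-1)^{{\bf C}(m,:){\bf x}} = -M + 2\,v({\bf x})$, where $v({\bf x})$ denotes the number of violated parity checks. This term equals $-M$ precisely on the codebook $\{{\bf x} : \mathrm{mod}({\bf C}{\bf x},2)={\bf 0}\}$, and increases by an even integer, hence by at least $2$, as soon as any check is violated.

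Second, I would bound the distance term. With ${\bf y},{\bf x}\in\{0,1\}^N$ we have $\sum_{n=1}^N ({\bf y}(n)-{\bf x}(n))^2 = \|{\bf y}-{\bf x}\|_2^2$, an integer in $\{0,1,\dots,N\}$. Hence the second summand of (\ref{eq:MLPCCasminCPD}) is $\frac{1}{e}\left(1+\frac{1}{N}\right)^{\|{\bf y}-{\bf x}\|_2^2}$, which is a strictly increasing function of $\|{\bf y}-{\bf x}\|_2^2$ and, by the elementary inequality $\left(1+\frac{1}{N}\right)^N < e$ valid for every $N\ge 1$, lies strictly between $\frac{1}{e}$ and $1$ for all feasible ${\bf x}$.

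Third, combining the two pieces: the total cost is $-M + 2\,v({\bf x}) + \varepsilon({\bf x})$ with $\varepsilon({\bf x})\in(0,1)$. Any ${\bf x}$ with $v({\bf x})\ge 1$ has cost at least $-M+2 > -M+1$, whereas every codeword has cost strictly below $-M+1$; therefore every global minimizer of (\ref{eq:MLPCCasminCPD}) is a valid codeword. Restricted to codewords, the cost is $-M + \frac{1}{e}\left(1+\frac{1}{N}\right)^{\|{\bf y}-{\bf x}\|_2^2}$, a strictly monotone transformation of $\|{\bf y}-{\bf x}\|_2^2$, so minimizing it over the codebook is equivalent to minimizing $\|{\bf y}-{\bf x}\|_2^2$ over the codebook, which is exactly (\ref{eq:MLPCC}); in particular the two argmin sets coincide.

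The only delicate point, and the one I would check most carefully, is the choice of constants that guarantees this scale separation: the base $1+\frac{1}{N}$ and the prefactor $\frac{1}{e}$ are picked so that the distance penalty never reaches the value $1$ (using $(1+1/N)^N < e$) while the smallest possible penalty from a violated check is $2$. Any base $b>1$ with $b^N$ bounded by the per-violation gap would serve, but one must verify the strict inequalities at the endpoints $\|{\bf y}-{\bf x}\|_2^2 \in\{0,N\}$ and for $v({\bf x})=1$. I would also note that the argument goes through verbatim when ${\bf y}$ is only assumed to lie in $[0,1]^N$, since then $({\bf y}(n)-{\bf x}(n))^2 \le 1$ for each $n$ still yields $\sum_{n=1}^N({\bf y}(n)-{\bf x}(n))^2 \le N$; in that case (\ref{eq:MLPCC}) is read as minimum-Euclidean-distance (AWGN) decoding.
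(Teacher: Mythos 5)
Your proof is correct and follows essentially the same argument as the paper: the parity term contributes $-M$ on codewords and at least $-M+2$ otherwise, the distance term is confined to $(\frac{1}{e},1)$ via $(1+\frac{1}{N})^N<e$, so minimizers must be codewords, and monotonicity of the distance term then identifies the argmin sets. Your version is slightly more explicit (writing the first term as $-M+2v({\bf x})$ and checking the endpoint inequalities), but there is no substantive difference in approach.
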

\begin{proof}
Note that each term of the sum on the left is equal to $1$ when the corresponding parity equation is satisfied, or $-1$ otherwise. With the minus sign up front of the sum, minimization of this term will produce a valid codeword that satisfies all parity checks and yield an overall value $-M$. Any constraint that is violated will increase this term by $2$. The term on the right is monotonically increasing with the squared loss  $||{\bf y}-{\bf x}||_2^2=\sum_{n=1}^N ({\bf y}(n)-{\bf x}(n))^2$. The maximum value of the latter is $N$, while $\left( 1 + \frac{1}{N}\right)^N$ is upper bounded by its limit as $N \rightarrow  \infty$, which is the base of the natural logarithm, $e$. It follows that the term on the right is strictly less than $1$ (and lower bounded by $\frac{1}{e}$). Since the all-zero codeword satisfies all parity checks, it follows that the optimum solution to (\ref{eq:MLPCCasminCPD}) should have cost less than or equal to $-M+1$. Moreover, this value cannot be attained by any ${\bf x}$ which does not satisfy all parity checks, because such a ${\bf x}$ would incur a cost of at least $-M+2$, even discarding the second term in (\ref{eq:MLPCCasminCPD}). Hence, any solution of (\ref{eq:MLPCCasminCPD}) must satisfy all parity checks. Among all ${\bf x}$ that satisfy all parity checks, those that minimize $\sum_{n=1}^N ({\bf y}(n)-{\bf x}(n))^2$ yield the least overall cost in (\ref{eq:MLPCCasminCPD}), and thus the proof is complete.     
\end{proof}
\begin{remark}
When ${\bf y}$ is real-valued (AWGN channel) the only difference is the scaling of the second term, which should be $\frac{1}{c}$ instead of $\frac{1}{e}$, with $c:= (1+\frac{1}{N})^{||{\bf z}||^2_2}$ and 
\[
{\bf z}(n) := \left\{ \begin{array}{ll}
{\bf y}(n), & {\bf y}(n) > 0.5,\\
{\bf y}(n)-1, & {\bf y}(n) \leq 0.5.
\end{array}
\right.
\]
\end{remark}

\subsection{Codes over higher-order Galois Fields} Our framework can also handle the decoding of codes over higher-order Galois fields. For example, let $L=2^\ell$ and consider a system of parity equations over $\text{GF}(L)$. Such systems are of form ${\bf C} {\bf x} = {\bf q}$, where both ${\bf C}$ and ${\bf q}$ are given, and the equality is modulo $L=2^\ell$, i.e., ${\bf C} {\bf x} - {\bf q}$ is a vector of integer multiples of $L$. We can handle this type of equation by bringing in a familiar signal processing tool, namely, the complex roots of unity. That is, we seek to minimize over ${\bf x} \in \left\{0,\cdots,L-1\right\}^N$ the following cost function
\begin{equation}
\label{eq:HOGFminCPD}
\hspace{-1mm}-\sum_{m=1}^M \hspace*{-0.2mm}\text{Re}\left\{ e^{j \frac{2 \pi}{L} \left({\bf C}(m,:) {\bf x} - {\bf q}(m) \right)} \right\}\hspace{-0.5mm} + \frac{1}{\tilde c} \left( \hspace{-1mm}1 + \frac{1}{N}\right)^{\sum_{n=1}^N ({\bf y}(n)-{\bf x}(n))^2}\hspace{-10mm},
\end{equation}
where $j:=\sqrt{-1}$. 
The same logic applies for appropriately choosing $\tilde c$. 

\subsection{Least inconsistent solution of overdetermined linear system of equations over GF(2)}

So far, we have been dealing with underdetermined linear equations over a GF, and looking at minimum distance solutions relative to an ``anchor''. This is similar to the minimum norm solution of underdetermined linear equations in the real or complex field. The overdetermined version of the problem is also of interest, and has many applications -- e.g., in cryptanalysis \cite{Hastad1,Hastad2,HastadSlides}. It turns out that our framework can deal with this problem as well. We are given a set of $M>N$ linear equations in $N$ variables ${\bf G} {\bf x} = {\bf y}$ $\Leftrightarrow$ ${\bf G} {\bf x} + {\bf y} = {\bf 0}$ over $\text{GF}(2)$. The system is usually inconsistent, so we seek a ${\bf x}$ that minimizes the number of violated constraints, i.e., a solution that is least inconsistent. This can be simply posed as
\begin{equation}
\label{eq:OVERDETminCPD}
\min_{{\bf x} \in \left\{0,1\right\}^N} - \sum_{m=1}^M (-1)^{{\bf G}(m,:) {\bf x} + {\bf y}(m)}. 
\end{equation}

\subsection{Underdetermined or overdetermined?}
Every linear code can be generated as a linear combination of the columns of a code generating matrix ${\bf G}$. When  ${\bf G}$ is tall ($M \times N$ with $M > N$) and full column rank, the valid code words live in a subspace of dimension $N$ and they are orthogonal to the rows of a parity check matrix ${\bf C}$ of size $(M-N) \times M$. Given a noisy code word, we may pose the optimal decoding problem in two equivalent ways:
\begin{itemize}
    \item One is in code space, i.e., find a valid code word that is closest to the given noisy code word, and this is the formulation in (\ref{eq:MLPCCasminCPD}), wherein the unknown vector ${\bf x}$ is the sought clean code word. After solving (\ref{eq:MLPCCasminCPD}), we need to solve a consistent system of linear equations over GF(2) (via Gaussian elimination) to obtain the latent information sequence -- unless the code is systematic / in standard form wherein the information sequence appears as the prefix of the code sequence.     
    \item The other option is to pose the problem in the lower-dimensional space of the information sequence, i.e., find an information sequence that produces a valid code word that is closest to the given noisy code word; this is the formulation in (\ref{eq:OVERDETminCPD}), wherein the unknown vector ${\bf x}$ is the sought information sequence\footnote{In the problem statements, ${\bf x}$ is always taken to be an $N \times 1$ vector for consistency, but what is $N$ (length of code word or length of information sequence) changes depending on the context.}. When we take this route, there is no need for the additional Gaussian elimination step at the end, as we directly recover the information sequence. Note that the number of optimization variables is smaller and the rank of the CPD model is higher this way, but the complexity of our algorithms is linear in the number of unknowns and the rank, so this does not really affect the complexity of our approach. 
    \item The difference between the two ways of approaching the problem   lies in the initialization. If we go via (\ref{eq:MLPCCasminCPD}) there is a natural initialization for ${\bf x}$ -- the received noisy code word. Notice that this works for any parity check matrix. If we choose to solve (\ref{eq:OVERDETminCPD}) on the other hand, there is no obvious way to initialize the information sequence ${\bf x}$, unless the code is systematic -- in which case we read out a noisy version of ${\bf x}$ from the noisy code word itself. It is therefore preferable to use (\ref{eq:MLPCCasminCPD}) unless the code is systematic. The code is not systematic in cryptography applications for example.  
\end{itemize}

\subsection{Reprise} 

As we conclude this section, it is useful to reflect on what we learned. The take-home point is that there are many important problems which can be posed as instances of low-rank tensor minimization, for which it is not even necessary to perform tensor factorization -- the low-rank factors can be readily derived analytically, in closed-form. These clean-cut mappings of classic hard problems to instances of low-rank tensor minimization reinforce our hopes that, even in cases where the problem is more complicated and the cost function is not fully known (i.e., only examples / samples of the cost function are given), it may be possible to model those ``blind'' optimization problems using low-rank tensor factorization and minimization. 

Another important observation which stems from uniqueness of tensor completion, is that under certain conditions it is not even necessary to completely specify an instance of ILS or ILP  in the traditional sense of providing its input parameters $\textbf{H}, \textbf{b}$, $\textbf{c}$ in order to solve it. It is enough to specify the cost at certain (randomly chosen or systematic) points, and let tensor completion fill out the ``rest of the problem''. This possibility is certainly intriguing, and the direct result of uniqueness of low-rank tensor completion and our problem reformulation.   

\section{Experiments}

\subsection{Partition} 

Since we used the partition problem to establish the hardness of our problem in the worst case, let us compare one of the proposed continuous optimization algorithms for minCPD to an established approximation algorithm for the partition problem. For this purpose, we will use {\em greedy partitioning algorithm} \cite{MultiWayPartitioning} which first sorts the given numbers and then parses the sorted list from largest to smallest, assigning each to the bucket with the smallest running sum. This algorithm comes with a 7/6 approximation guarantee in terms of the  larger sum it outputs divided by the larger sum of an optimal partition. For smaller $N$ we also use enumeration to compute the optimal partition as another baseline.

The results obtained using the greedy algorithm and minCPD via Frank-Wolfe with adaptive stepsize (Algorithm \ref{algo:FW}) are summarized in Fig. \ref{fig:PartitionN20} for $N=20$ (with enumeration as another baseline) and Fig. \ref{fig:PartitionN30} for $N=30$ (without enumeration), for 100 Monte-Carlo trials each. 
In each trial, $N$ random integers in $\left\{1,\cdots,100\right\}$ are first drawn and then normalized to sum to $1$. For Frank-Wolfe, we used $C=5$, a hard-stop at a maximum of $1000$ iterations, and $5$ random initializations. 

It is clear that Algorithm \ref{algo:FW} outperforms the well-established greedy algorithm, and in many cases attains the optimal solution (zero subset imbalance, or the optimal subset imbalance obtained via enumeration, see Figs. \ref{fig:PartitionN20} and \ref{fig:PartitionN30}). 

\begin{figure}
	\centering
	\includegraphics[width=3in,keepaspectratio]{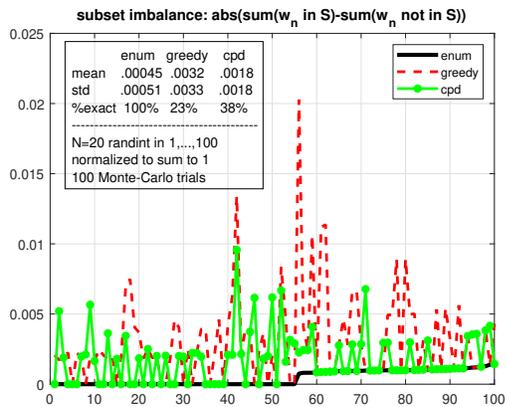}
	\caption{Partition gap for 100 problem instances with $N=20$ random integers in $\left\{1,\cdots,100\right\}$, normalized to sum to $1$. Greedy, optimal (enumeration-based), and proposed minimum CPD algorithm using Frank-Wolfe. Each point on the $x$ axis corresponds to a problem instance, and the instances are sorted in order of increasing partition gap for the optimal enumeration-based solution.}
	\label{fig:PartitionN20}
\end{figure}

\begin{figure}[t]
	\centering
	\includegraphics[width=3in,keepaspectratio]{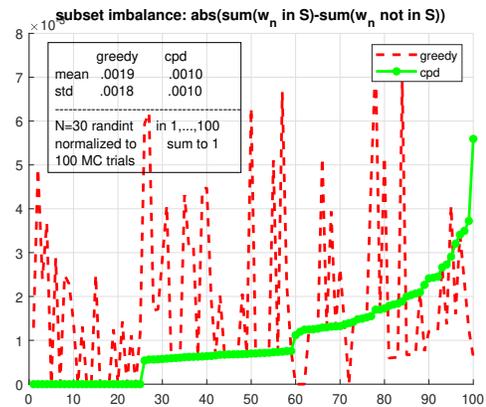}
	\caption{Partition gap for 100 problem instances with $N=30$ random integers in $\left\{1,\cdots,100\right\}$, normalized to sum to $1$. Greedy and proposed minimum CPD algorithm using Frank-Wolfe. Enumeration is too costly at $N=30$ ($2^{30}$ possible subsets), so the instances are sorted in order of increasing CPD partition gap.}
	\label{fig:PartitionN30}
\end{figure}

\subsection{Sign retrieval}

Our second set of experiments considers the application of our framework to the problem of sign retrieval. Towards this end, we use Algorithm \ref{algo:DGP} with stepsize parameter fixed to $0.1$ and a hard limit of $10^3$ gradient iterations. For initialization, we use the rank-one DP algorithm of Proposition \ref{prop:one}, which is used to efficiently determine the minimum of each rank-one factor. The best of these rank-one factor minima (the one that minimizes the full-rank cost) is then used to initialize Algorithm \ref{algo:DGP}. Ten additional random initializations are also used in case the DP initialization is not good enough. As a baseline, we use enumeration over all possible vectors of sign variables. 

For each setting of the sign retrieval problem parameters ($N=\text{length}({\bf x})$, $M=\text{length}({\bf y})$, $\sigma=\text{std}({\bf v}(m))$), we conduct 100 Monte-Carlo trials. For each trial, we draw random i.i.d. standard Gaussian ${\bf x}$ and ${\bf A}$, and i.i.d. zero-mean Gaussian noise ${\bf v}$ of standard deviation $\sigma_v$. 

For our first experiment, we choose $N=6$, $M=12$, and $\sigma_v=0.5$. Note that for this application the order of the tensor used in the CPD model is $M$ and its rank is $\frac{M(M-1)}{2}$. Fig. \ref{fig:SRN12M6sigma05cost} shows the cost function value attained for $100$ randomly drawn problem instances, constructed as specified above. Note that CPD comes very close to the optimal cost of enumeration, with a few minor spikes, for all instances considered. For this application, what is perhaps more important than the value of the cost function is the squared error between the ground-truth ${\bf x}$ and the $\hat {\bf x}$ estimated by a given algorithm. The values of squared error attained by CPD and enumeration are shown in Fig. \ref{fig:SRN12M6sigma05squarederror}. Notice that enumeration is by definition optimal in terms of the cost function, but not necessarily optimal in terms of instantaneous or even mean squared error (MSE) -- as the cost function is only a surrogate for MSE. Indeed, there are a few instances where CPD is better than enumeration in terms of squared error, and vice-versa. Overall though, the two approaches are very close in this experiment.  

\begin{figure}
	\centering	\includegraphics[width=3in,keepaspectratio]{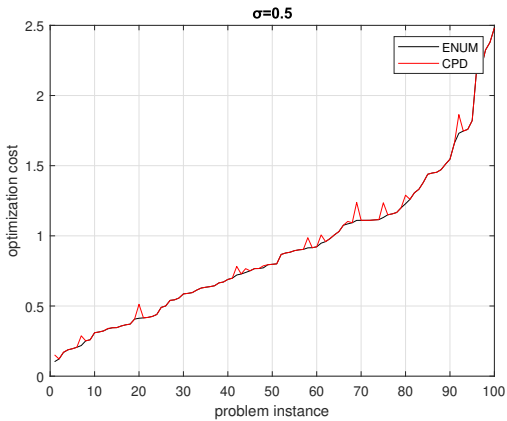}
	\caption{Sign retrieval: Cost attained by the enumeration-based solution and the proposed CPD-based approach, for $M=12$, $N=6$, $\sigma=0.5$. See text for details. Each point on the $x$ axis corresponds to a problem instance, and the instances are sorted in order of increasing cost for the optimal enumeration-based solution.}
	\label{fig:SRN12M6sigma05cost}
\end{figure}

\begin{figure}
	\centering	\includegraphics[width=3in,keepaspectratio]{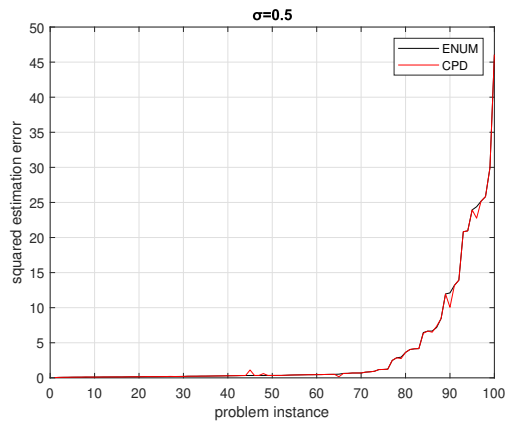}
	\caption{Sign retrieval: Squared ${\bf x}$-estimation error  attained by the enumeration-based solution and the proposed CPD-based approach, for $M=12$, $N=6$, $\sigma=0.5$. Each point on the $x$ axis corresponds to a problem instance, and the instances are sorted in order of increasing squared error of the enumeration-based solution.}
	\label{fig:SRN12M6sigma05squarederror}
\end{figure}

Monte-Carlo averages of the optimization cost and the squared error are depicted in Figs. \ref{fig:SRN12M6_cost_mse_function_sigma} and \ref{fig:SRN=2Msigma05_cost_mse_function_M} as a function of $\sigma$ and $N$ (with $M=2N$), respectively. We again observe the excellent performance of the proposed CPD approach in this set of experiments. 

\begin{figure}
	\centering	\includegraphics[width=3in,keepaspectratio]{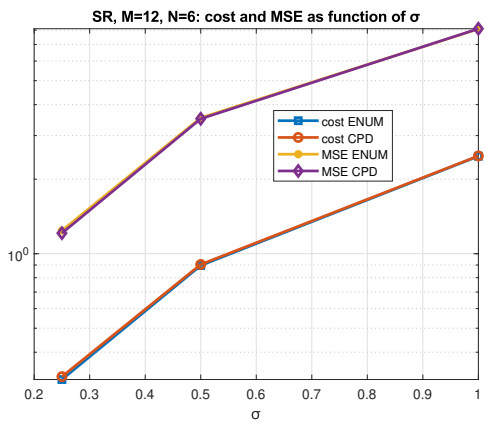}
	\caption{Sign retrieval: Mean optimization cost and MSE attained by the enumeration-based solution and the proposed CPD-based approach as a function of $\sigma$, for $M=12$, $N=6$.}
	\label{fig:SRN12M6_cost_mse_function_sigma}
 \end{figure}
\begin{figure}
	\centering	\includegraphics[width=3in,keepaspectratio]{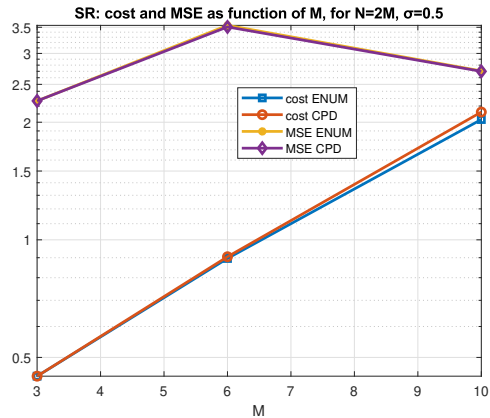}
	\caption{Sign retrieval: Mean optimization cost and MSE attained by the enumeration-based solution and the proposed CPD-based approach as a function of $N$, for $M=2N$ and $\sigma=0.5$.}
	\label{fig:SRN=2Msigma05_cost_mse_function_M}
\end{figure}

\subsection{Decoding parity check codes}

In our last set of experiments, we consider decoding five different rate-$\frac{1}{2}$ parity check codes, for different code densities and lengths $N$. For the first three scenarios $N=32$, while for the last two $N=96$. In the first scenario we use a custom LDPC code design\footnote{https://rptu.de/en/channel-codes/channel-codes-database/more-ldpc-codes} \cite{abu2010trapping}. In the remaining four scenarios, we use a systematic parity check matrix whose non-identity block is randomly generated from an i.i.d. $\left\{0,1\right\}$-Bernoulli distribution of density either $0.2$ or $0.8$. Each code is used for encoding i.i.d.  sequences of $N/2$ information bits, and the coded sequences are transmitted over BSCs with cross-over probability $p\in\left\{10^{-2.5}, 10^{-2}, 10^{-1.5}, 10^{-1}, 10^{-0.5}\right\}$. For each scenario and cross-over probability, we conduct $10,000$ Monte-Carlo runs and compare the decoding performance of our method to two baselines: i) enumeration (applicable only for $N=32$ due to its exponential complexity) and ii) belief propagation (BP) based decoding. Regarding our method, we use Algorithm 4 with step size equal to 0.05. The initialization of Algorithm 4 in terms of $\left\{\sigma_n\right\}_{n=1}^N$ and $\left\{\mu_n\right\}_{n=1}^N$ is $\sigma_n=0.5$, $\forall n$, while $\mu_n$ is set to the received noisy (possibly flipped) value of the corresponding code bit. Algorithm 4 is terminated when the number of iterations exceeds the limit of $2,000$ iterations or when the relative change of the objective drops below $10^{-9}$. As for the BP baseline, we use the \textit{ldpcDecode} function of MATLAB, to which the log-likelihood ratios based on the corresponding cross-over probabilities are provided as initialization, while the upper limit of iterations is set to $100$ (we did not observe any improvement beyond that). 

In Figs. (\ref{fig:PCdec1})-(\ref{fig:PCdec5}), we report the average Bit Error Rate (BER) for all the methods. We can observe that, in general, Algorithm 4 achieves better or comparable performance than BP. At low BSC cross-over probabibilities, we can see that Algorithm 4 outperforms BP for the custom LDPC code design of \cite{abu2010trapping}, and the high density parity check codes. In the rest of the cases the two methods attain comparable performance. BP has a small advantage for the longer low-density code in Fig. (\ref{fig:PCdec4}), as expected; BP works best with low-density long codes. We note that, unlike BP,  Algorithm 4 does not (need to) use the BSC cross-over probability, which is non-trivial to estimate for time-varying channels. We also note that BP exhibits a degradation in performance for the longer and denser code at low BSC error rates, see Fig. (\ref{fig:PCdec5}). This is repeatable (not an artifact of insufficient Monte-Carlo averaging) and likely due to the existence of many short loops in this case. On the other hand, BP is significantly faster than Algorithm 4. Overall though, given that Algorithm 4 is a completely new and application-agnostic take on a well-studied problem, the fact that it outperforms in terms of BER a proven MATLAB implementation of a custom-designed and widely used algorithm is satisfying. Matlab programs that can be used to reproduce the results in this subsection can be found in the companion supplementary material in IEEExplore.

\begin{figure}
	\centering	\includegraphics[width=3in,keepaspectratio]{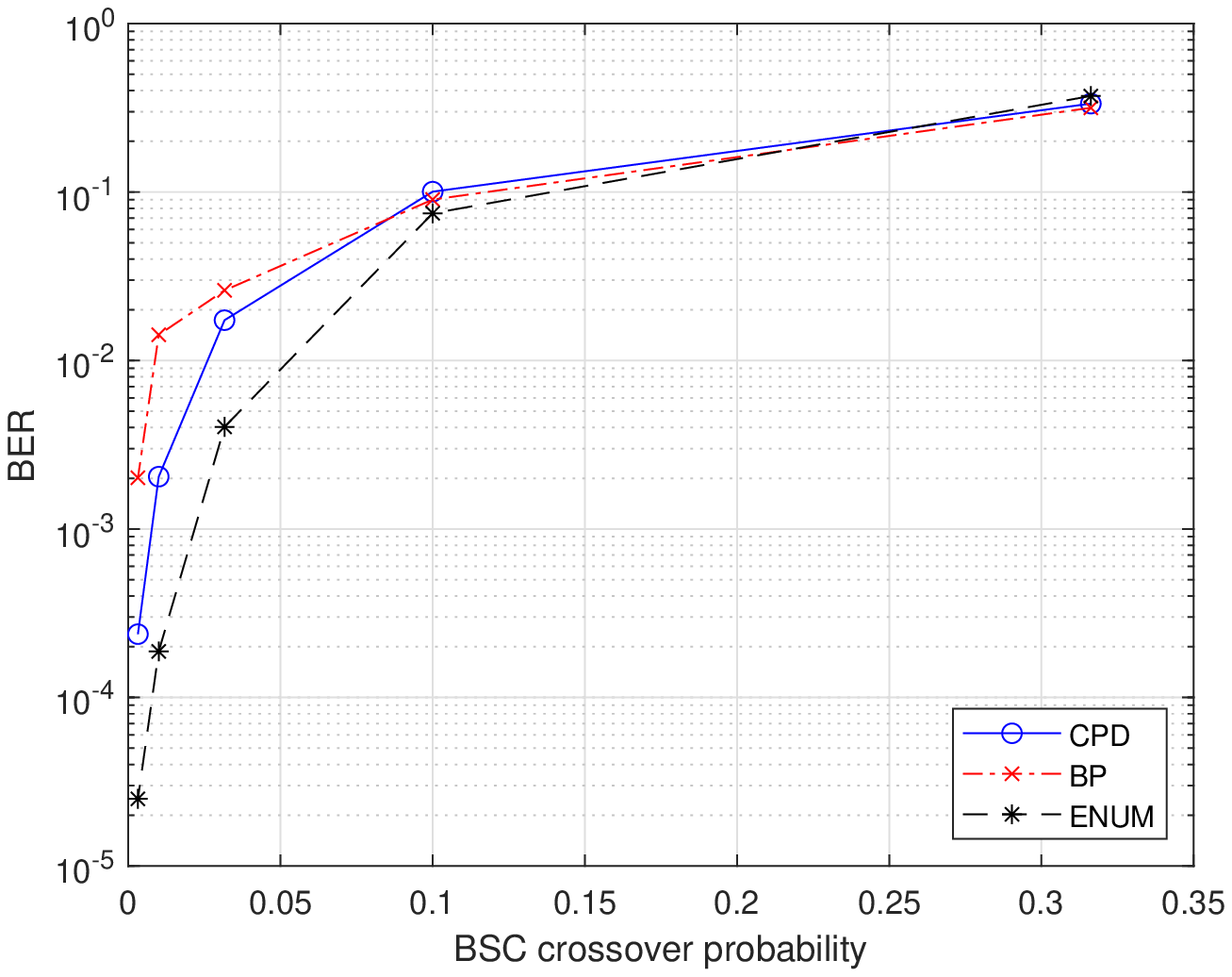}
	\caption{Parity check decoding: BER vs BSC cross-over probability for the first scenario: LDPC code of $N=32$ and $M=16$.}
	\label{fig:PCdec1}
 \centering	\includegraphics[width=3in,keepaspectratio]{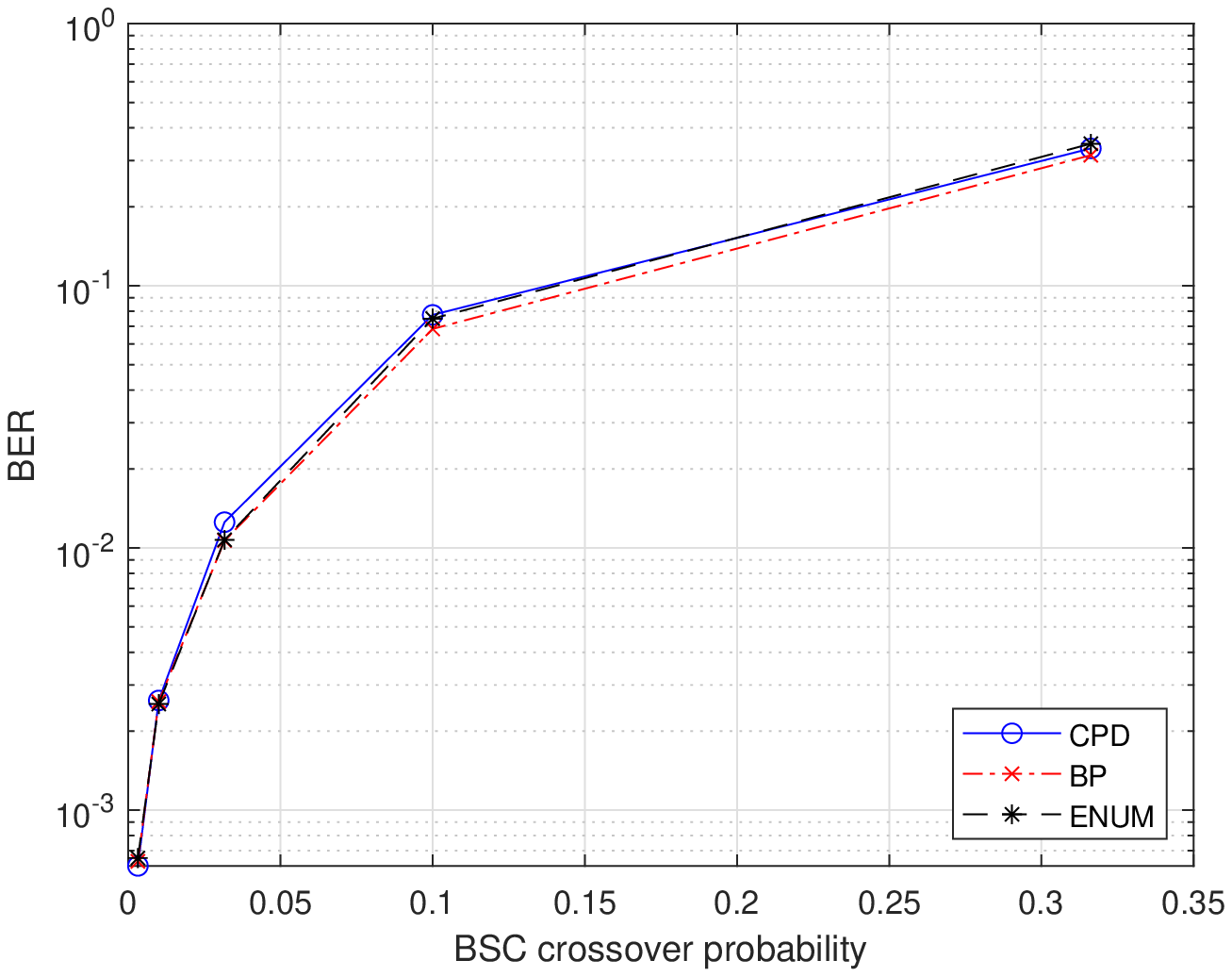}
	\caption{Parity check decoding: BER vs BSC cross-over probability for the second scenario: random code of $N=32$, $M=16$, and $20\%$ density.}
	\label{fig:PCdec2}
\end{figure}
\begin{figure}[t]
	\centering	\includegraphics[width=3in,keepaspectratio]{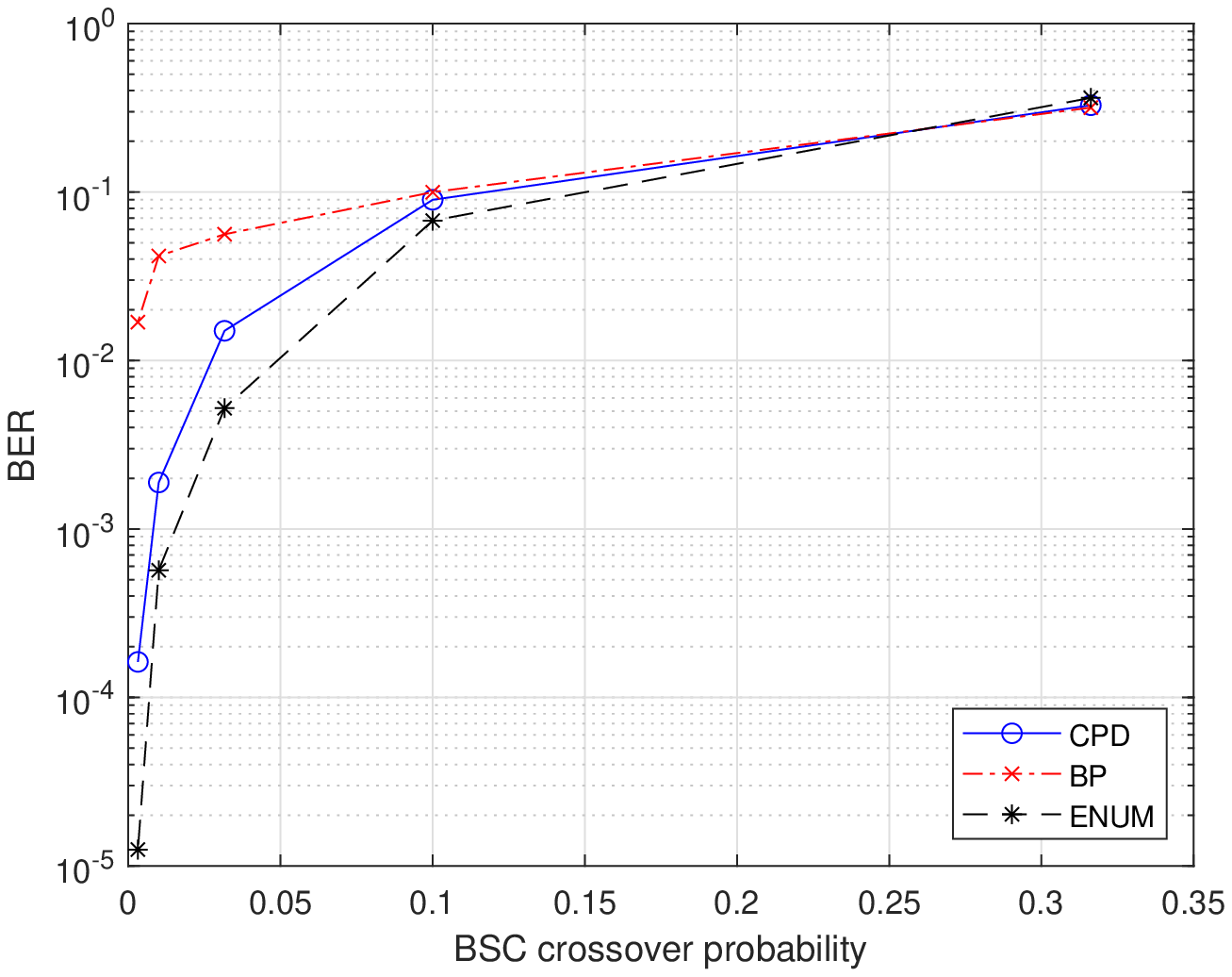}
	\caption{Parity check decoding: BER vs BSC cross-over probability for the  third scenario: random code with $N=32$, $M=16$ and $80\%$ density.}
	\label{fig:PCdec3}
	\centering	\includegraphics[width=3in,keepaspectratio]{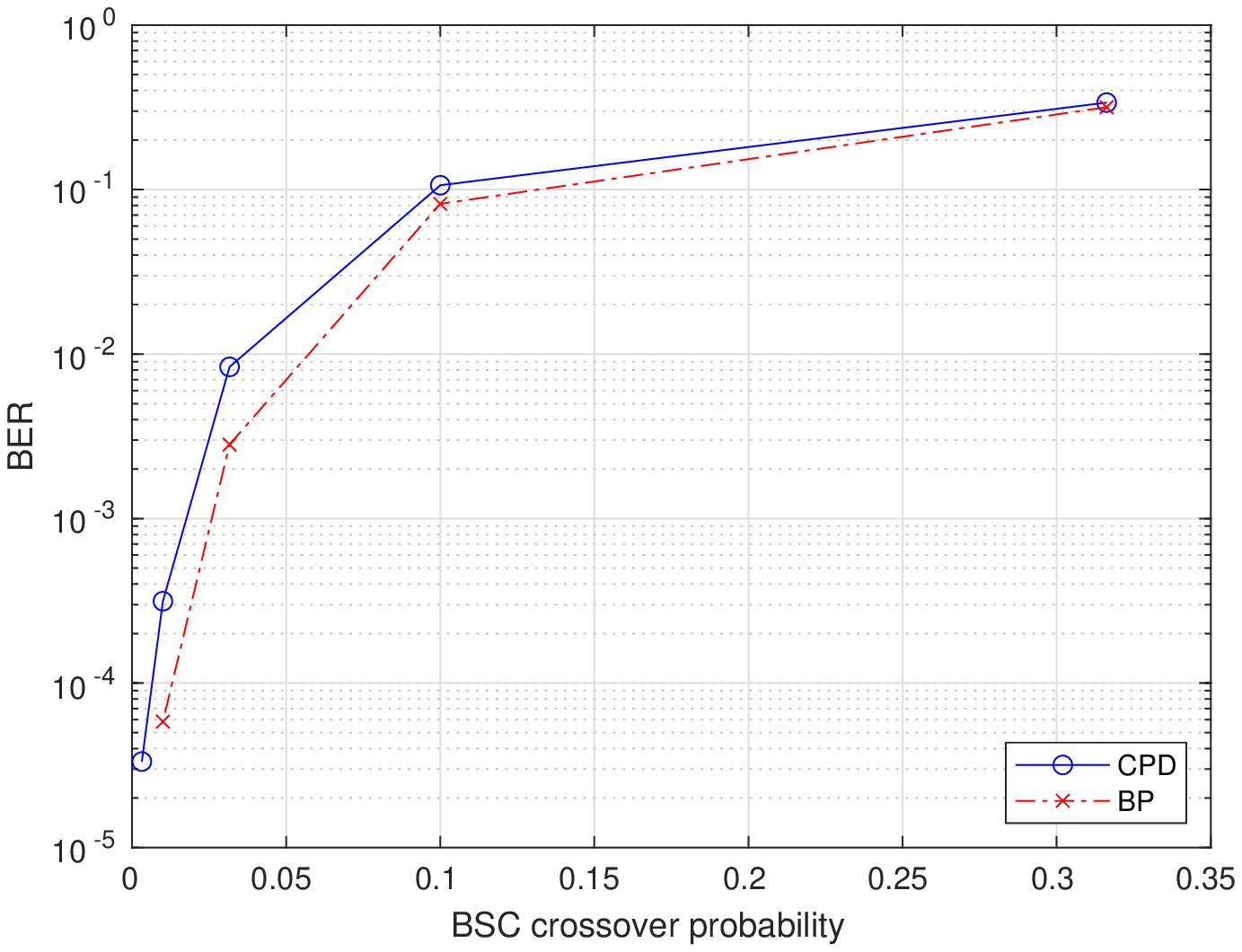}
	\caption{Parity check decoding: BER vs BSC cross-over probability for the fourth scenario: random code with $N=96$, $M=48$ and $20\%$ density.}
	\label{fig:PCdec4}
 \end{figure}
 \begin{figure}
	\centering	\includegraphics[width=3in,keepaspectratio]{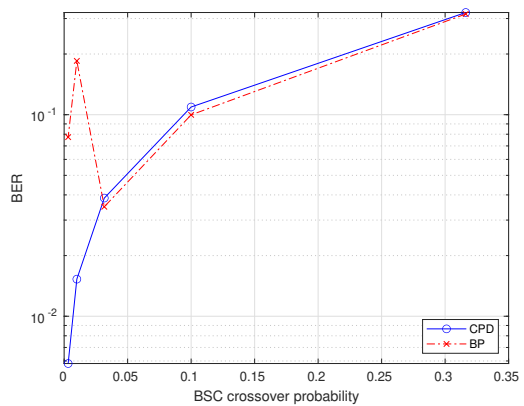}
	\caption{Parity check decoding: BER vs BSC cross-over probability for the fifth scenario: random code with $N=96$, $M=48$ and $80\%$ density.}
	\label{fig:PCdec5}
\end{figure}

\section{Conclusions}

We have considered a fundamental tensor problem and showed that it is NP-hard. While most tensor problems are NP-hard \cite{10.1145/2512329}, it is surprising to see that our particular problem is NP-hard for rank as small as two, but not for rank equal to one. We note here that the best rank-one least squares approximation problem for tensors is already NP-hard.

Given the discrete / combinatorial nature of the problem considered, it is also unexpected to see that it admits an equivalent continuous reformulation. While this reformulation is itself NP-hard by virtue of equivalence, it opens the door for gradient-based approaches from the nonconvex continuous optimization literature. 

We have shown that an impressive variety of hard optimization problems that are widely used in engineering can be posed as special instances of our problem of interest. These include integer least squares, integer linear and quadratic programming, certain mixed integer programming problems, and solving systems of underdetermined and overdetermined linear equations over Galois fields. For all these problems, the low-rank factorization needed to set up the optimization problem is available analytically, in simple closed-form. There is no need for tensor factorization. 

As tangible signal processing and communications engineering applications, we delved into sign retrieval and the decoding of linear parity check codes. We have shown that the performance of the proposed suite of gradient-based approaches is surprisingly good in many of these applications, and under certain conditions it can beat tried-and-proven application-specific algorithms that come with certain performance guarantees. This success is sometimes dependent on using a suitable initialization, either application-specific (``dirty'' code bits in the case of parity decoding) or ``universal'' (the DP algorithm used to initialize the gradient iterations for sign retrieval) in other cases. For other problems, like the partition, a few random initializations seem to work well.  

Our main results (hardness, equivalence of continuous reformulation) are also applicable to all popular tensor models beyond CPD, including Tucker/HOSVD, TT, and TR. 

Note that for the families of problems and applications considered in this paper, $R$ is a small constant or linear / very low-order polynomial function of $N$. For so-called {\em black box} optimization problems out in the wild, the worst-case $R$ can be exponential in $N$, and we would have to use low-rank approximation to keep complexity at bay. This is not an issue though for the various problems considered in this paper. 

We are currently working on further improving scalability and speed, establishing convergence for some of the algorithmic variants, and considering applications in a variety of settings. Performance analysis is naturally of interest, but is likely to be application-specific. We are also considering top-$k$ / bottom-$k$ extensions which are appropriate for top-$k$ recommendation and other applications. We hope to report on these directions in forthcoming work. 

\section{Acknowledgements}
Thanks to Prof. Farzad Farnoud-Hassanzadeh, who provided useful insights on linear GF coding and decoding, and to the reviewers for appreciating this work and offering useful suggestions.  

\bibliographystyle{IEEEtran}
\bibliography{IEEEabrv,refrences}

\begin{IEEEbiography}[{\includegraphics[scale=0.15]{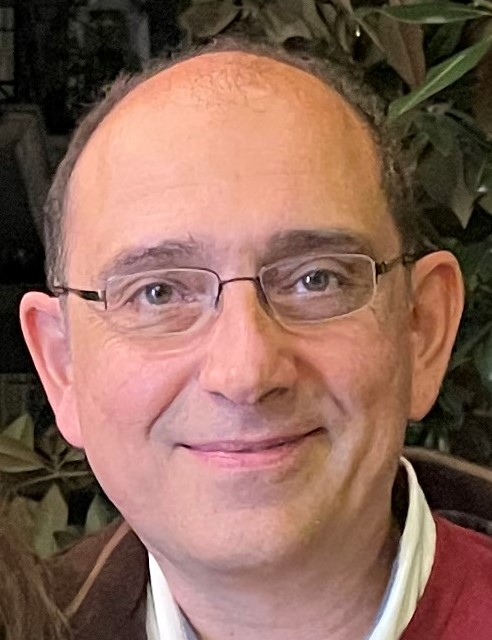}}]
{Nicholas D. Sidiropoulos} (Fellow, IEEE) received the Diploma in electrical engineering from the Aristotle University of Thessaloniki, Thessaloniki, Greece, and the M.S. and Ph.D. degrees in electrical engineering from the University of Maryland at College Park, College Park, MD, USA, in 1988, 1990, and 1992, respectively. He is the Louis T. Rader Professor in the Department of ECE at the University of Virginia. He has previously served on the faculty at the University of Minnesota and the Technical University of Crete, Greece. His research interests are in signal processing, communications, optimization, tensor decomposition, and machine learning. He received the NSF/CAREER award in 1998, the IEEE Signal Processing Society (SPS) Best Paper Award in 2001, 2007, 2011, and 2023, and the IEEE SPS Donald G. Fink Overview Paper Award in 2023. He served as IEEE SPS Distinguished Lecturer (2008–2009), Vice President - Membership (2017–2019) and Chair of the SPS Fellow evaluation committee (2020-2021). He received the 2010 IEEE SPS Meritorious Service Award, the 2013 Distinguished Alumni Award of the ECE Department at the University of Maryland, the 2022 EURASIP Technical Achievement Award, and the 2022 IEEE SPS Claude Shannon - Harry Nyquist Technical Achievement Award. He is a fellow of EURASIP (2014).
\end{IEEEbiography}

\begin{IEEEbiography}[{\includegraphics[scale=0.2, angle=-90, clip, trim = {14cm 9.5cm 12cm 10cm}]{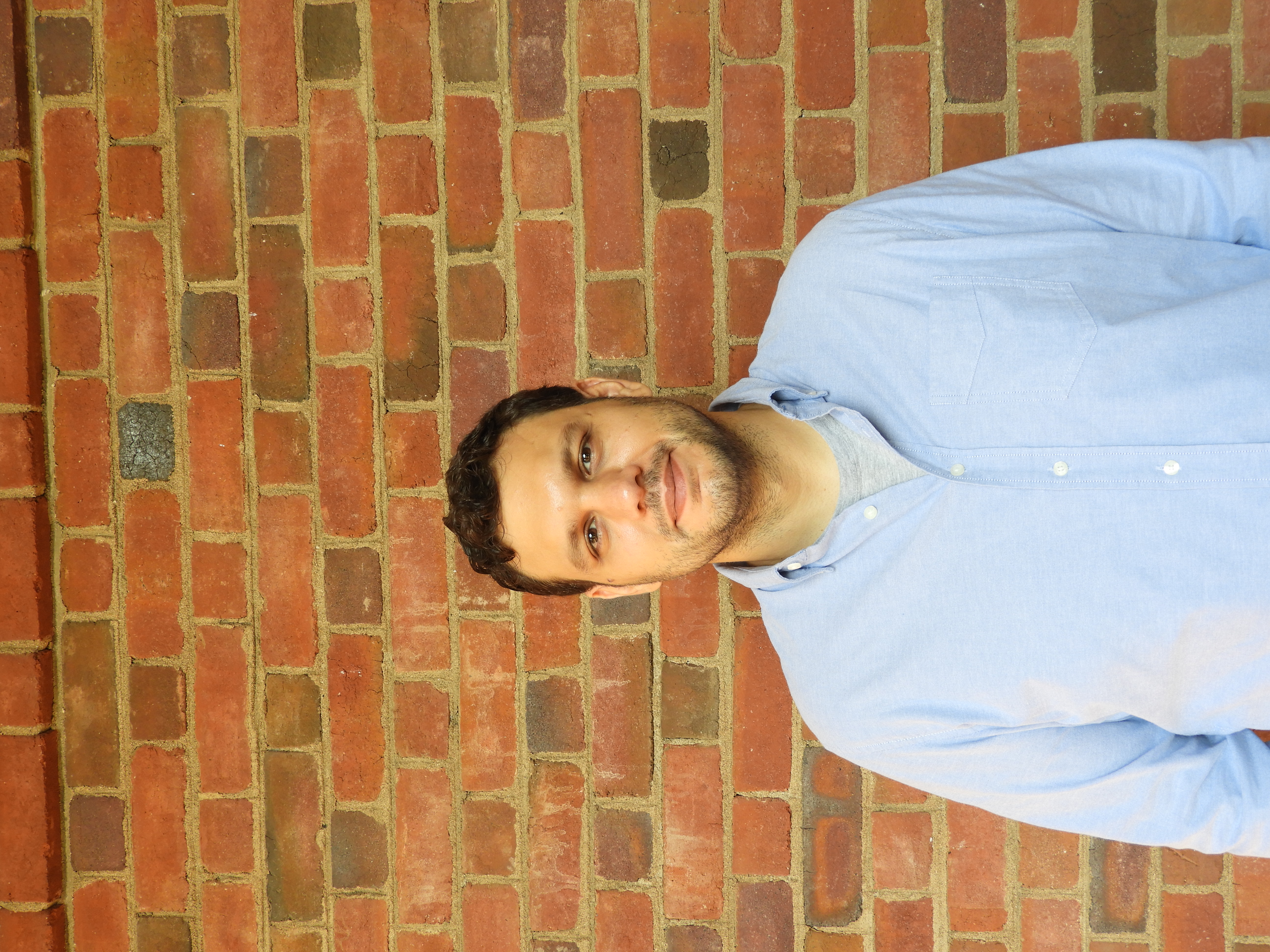}}]{Paris A. Karakasis}received the Diploma and M.Sc. degree in Electrical and Computer Engineering from the Technical University of Crete, Chania, Greece, in 2017 and 2019, respectively. Currently, he is a Ph.D. student at the Electrical and Computer Engineering Department, University of Virginia, Charlottesville, VA, USA. His research interests include signal processing, optimization, machine learning, tensor decomposition, and graph mining.
\end{IEEEbiography}

\begin{IEEEbiography}[{\includegraphics[scale=0.065]{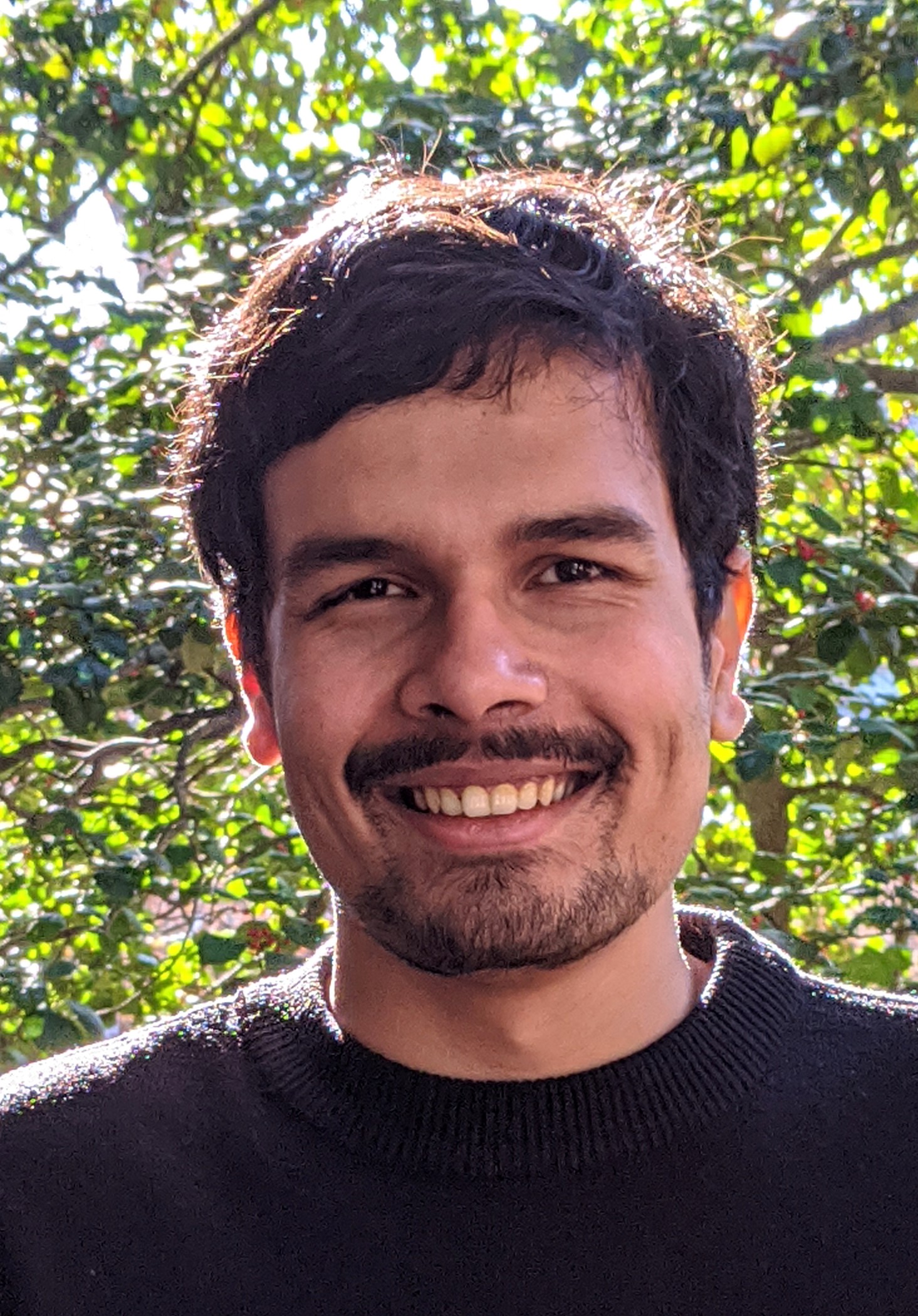}}]
{Aritra Konar} is an Assistant Professor in the Department of Electrical Engineering at KU Leuven, Leuven, Belgium. He received the B.Tech. degree in electronics and communications engineering from the West Bengal University of Technology, West Bengal, India, and the M.S. and Ph.D. degrees in electrical engineering from the University of Minnesota, Minneapolis, MN, USA, in 2011, 2014, and 2017 respectively. From 2017-2022, he was a postdoctoral researcher with the Department of ECE, University of Virginia, Charlottesville, VA, USA. His research interests include signal processing, graph mining, nonlinear optimization and data analytics.
\end{IEEEbiography}

\end{document}